\newtheorem{teo}{Theorem}[section]
\newtheorem{defi}[teo]{Definition}
\newtheorem{prop}[teo]{Proposition}
\newcommand{\R}{\mathbb{R}}
\newcommand{\C}{\mathbb{C}}
\newcommand{\captionfonts}{\normalsize}
\long\def\@makecaption#1#2{%
  \vskip\abovecaptionskip
  \sbox\@tempboxa{{\captionfonts #1: #2}}%
  \ifdim \wd\@tempboxa >\hsize
    {\captionfonts #1: #2\par}
  \else
    \hbox to\hsize{\hfil\box\@tempboxa\hfil}%
  \fi
  \vskip\belowcaptionskip}
\begin{document}

\hspace{13.9cm}

\ \vspace{20mm}\\

{\LARGE A Categorical Framework for Quantifying\\Emergent Effects in Network Topology }

{\bf \large Johnny Jingze Li$^{\displaystyle 1,\displaystyle 2}$, Sebastian Pardo Guerra$^{\displaystyle 1, \displaystyle 2, \displaystyle 3}$, Kalyan Basu$^{\displaystyle 1}$, Gabriel A. Silva$^{\displaystyle 1, \displaystyle 3, \displaystyle 4}$}\\
{$^{\displaystyle 1}$Center for Engineered Natural Intelligence, University of California San Diego, La Jolla, CA, United States}\\
{$^{\displaystyle 2}$Department of Mathematics, University of California San Diego, La Jolla, CA, United States}\\
{$^{\displaystyle 3}$Department of Bioengineering, University of California San Diego, La Jolla, CA, United States}\\
{$^{\displaystyle 4}$Department of Neurosciences, University of California San Diego, La Jolla, CA, United States}\\

%

{\bf Keywords:} emergence, homological algebra, derived functor, cohomology, quiver representation, random Boolean network


\thispagestyle{empty}
\markboth{}{NC instructions}
\ \vspace{-0mm}\\
\newpage
\begin{center} {\bf Abstract} \end{center}
Emergent effect is crucial to understanding the properties of complex systems that do not appear in their basic units, but there has been a lack of theories to measure and understand its mechanisms. In this paper, we consider emergence as a kind of structural nonlinearity, discuss a framework based on homological algebra that encodes emergence as the mathematical structure of cohomologies, and then apply it to network models to develop a computational measure of emergence. This framework ties the potential for emergent effects of a system to its network topology and local structures, paving the way to predict and understand the cause of emergent effects.  We show in our numerical experiment that our measure of emergence correlates with the existing information-theoretic measure of emergence.  

\section{Introduction}
Emergent effects of complex systems are, broadly speaking, phenomena present in the system that are not shared by and cannot be explained by an understanding or considerations of the system's constituent components in isolation. This is often expressed as 'the whole is more than the sum of its parts'. The concept of emergence in complex systems has been a topic of extensive study and is of importance across most of the natural and physical sciences, and increasingly in engineering \citep{anderson1972more,kalantari2020emergence, post1997emergent, o2020emergent,forrest1990emergent}. 

While there is no universally accepted technical definition of emergence, there are a number of intuitive ideas and concepts that capture its essence and find commonality across disciplines. Probably the most important is the notion of novel properties, patterns, or behaviors that are not explicitly or predictably present in the constituent parts that make up a system, or in the interactions between the parts. This phenomenon captures the idea that emergent properties are not reducible to or explainable solely by the properties of individual components, but rather, arise from the interactions and organization of those components within the system in highly non-obvious ways. It also emphasizes the non-linearity and unpredictability associated with emergence.  

Surprisingly, with the recent performance of large language models, emergence is also becoming an important topic in machine learning and artificial intelligence. Researchers have observed a surge in performance when the scale of these models reached a certain threshold that could not be predicted by extrapolation \citep[see, e.g.,][]{wei2022emergent,teehan2022emergent}. The observations of emergent abilities of machine learning model as it scales up is opening up novel and exciting research avenues to explore, and inspire the quest to explain, the unreasonable effectiveness of massive deep learning architectures. 

Still, a clear limitation of current work on emergence across essentially all fields and applications is that they remain at a phenomenological level, such as the shift of system behavior with an increasing number of computational units. The kind of measures derived from observation of such macroscopic effects do not result in mechanistic insights into why such emergence occurs, or how to anticipate and predict it. To address this, we need new measures that can describe and predict emergence on a more fundamental level capable of incorporating structural information and interactions between components \cite[see, e.g.,][]{barabasi2004network}. Grounded in a mechanistic understanding of emergence with measures that capture causal relationships, new theoretical models and numerical simulations might be able to explain how system structure changes when emergent effects occur. Eventually, this might support the intentional engineering of emergent effects. 

There have been a number of attempts at establishing frameworks that describe and measure emergence \citep[e.g.][]{crutchfield1994calculi, bar2004mathematical, gadioli2018categoriacal, gershenson2012complexity, correa2020metrics, green2023emergence, mediano2022greater}. Some approaches are based on information-theoretic notions, but in most cases the structure and interactions between participating components are not explicitly included in the computation of emergence, thus falling short of providing mechanistic insights. The challenge is in connecting the level of observations where researchers study the complexity of structures, patterns or dynamics with the more fundamental level of interacting components that give rise to emergent effects, and how to represent interactions in a way that is effective and easy to compute. As such, a universal and computational measure of emergence applicable to real-world systems still does not exist. 

Notably, in \cite{adam2017systems}, these shortcomings are partially addressed by studying the functor that connects two levels and proving that the approach to quantify emergence as the non-commutativity of algebraic operations is equivalent to a "loss of exactness" which we progressively describe in our paper — in the sense that the interactions between components are represented through the morphisms and colimits of a category. We show that observing the algebraic properties of the derived functor on that category is sufficient to capture and describe emergence. This new perspective thus supports more effective ways of quantifying emergence in a practical setting.

Our work in this paper builds on the work by \cite{adam2017systems}. Specifically, we establish a framework for studying and evaluating emergence — which we refer to as generativity to be more precise — that incorporates both an abstract mathematical framework and a practical computational metric capable of being applied to real-world systems such as networks. For example, we will show that our framework can potentially be used to study the emergent properties of massive deep learning architectures. Essentially, a system's ability to sustain emergent effects is represented as the mathematical structure of derived functors and cohomologies which are computable. Algebraic tools can then be applied to further investigate emergent effects. Compared with numerical metrics that evaluate emergence, our framework is richer in information, as the algebraic structure encodes more structural information about the system, for example, the contribution of each component to emergence as the global property of the system, thus supporting an increased understanding of the mechanistic details resulting in emergent phenomena. We then compare this framework with the existing information-theoretic measure of emergence through the numerical experiment on networks of different connectivities and show a correlation that supports the ability of our theory to capture emergence generally. 

The paper will be organized as follows: In Section 2 we define and illustrate emergent effects through examples. In Section 3, we present the framework for emergence as loss of exactness, adapted from \cite{adam2017systems}. In Section 4 and 5, this framework is realized as the mapping between networks, or more generally, quiver representations and then we compute the mathematical structure which encodes emergence. In Section 6 we establish the numerical measure of emergence and then discuss the numerical result showing a correlation with existing measures of emergence.

\section{Emergence as Structural Nonlinearity}

Two key conceptual components are necessary to qualitatively describe emergent effects within the framework proposed by Adam \cite[see][]{adam2017systems}. The first is a notion of interaction or interconnection among the components of a system. For example, a thermodynamics system consisting of particles exhibits interesting properties because of the interactions among particles. The second is the notion of interactional effects, which equips each system with an observable. For example, we can measure global properties of particles systems, like temperature, but not from the observation or description at the level of individual particles. These kinds of interactional effects are almost always associated with partial observations, or a simplification and integration of lower — more foundational or granular —levels or scales in the system that result in a 'loss of information' or inability to properly observe the effects of those interactions at a higher level. 

With these two ingredients, we can define emergence as a partial observation of interacting and interconnected components within a system that cannot be explained by known interactions that produce or result in partial observations of the components. This notion agrees with the intuitive understanding of emergence that some properties of the interconnected components cannot be decomposed or reduced to combinations of known properties of the constituent components, i.e. that the whole is more than the sum of its parts. This notion of emergence is the foundation on which our work in this paper, building on the framework first proposed in \cite{adam2017systems}, develops a mathematical definition and computational measure of emergence. 

To formalize these ideas, we begin by representing the interactions between components as an operation $\vee$, where $s_1\vee s_2$ represents a new interconnected system of subsystems $s_1$ and $s_2$. Interactional effects are described by the mapping $\Phi$ that sends a system to its partial observation or interactional effect at a larger scale, in some cases corresponding to a coarse-graining scheme \cite{rosas2024software}. Emergent effects are sustained whenever the observation of the combined system cannot be explained by the observation of the separate components.  Mathematically,
\begin{defi}
A system sustains emergent effects when the following inequality is satisfied:
\begin{align}
\Phi(s_1\vee s_2) \neq \Phi(s_1)\vee \Phi(s_2),
\end{align} 
for some constituent subsystems $s_1$ and $s_2$.\label{emergencedef} 
\end{defi}

Let's consider the simple case when $\Phi$ is simply a smooth function $f: \R \rightarrow \R$ and the interaction $\vee$ is simply taking the average, $s_1\vee s_2 = (s_1+s_2)/2$. Then we realize that the extent to which $\Phi(s_1\vee s_2)$ differs from $\Phi(s_1)\vee \Phi(s_2)$ is just $\Big |f(\frac{s_1+s_2}{2})-\frac{f(s_1)+f(s_2)}{2} \Big|$, which is related to how nonlinear the function $f$ is, and can be studied by the derivatives of $f$, in particular, the second order derivative, since $\Big |f(\frac{s_1+s_2}{2})-\frac{f(s_1)+f(s_2)}{2} \Big|$ can be approximated by $\frac{|s_2-s_1|^2}{4}|f''(\xi)|$ for some $\xi$ between $s_1$ and $s_2$. Now when $\Phi$ models the cross-scale information flow in real world systems, we want an analogue to derivatives to apply this idea, and this naturally leads to the concept of a derived functor in homological algebra, which will be discussed in the next Section. We can also see that Definition \ref{emergencedef} captures the structural nonlinearity in emergence, the nonlinearity of system's behavior and functionality as the system's structure changes, or as we go from components to parts of the system to the whole system.

This is a general mathematical definition of emergence, as first given in \cite{adam2017systems} \footnote{In \cite{adam2017systems} the term "generativity" is used instead of emergence. While these two terms are often considered interchangeable, we opt for "emergence" in this paper, given its broader recognition and usage.}. Note that when studying the emergence of a specific system, the interaction $\vee$ and interactional effect $\Phi$ need to be chosen carefully. Some examples are provided in \cite{adam2017systems} and we will also discuss one in this section. Moreover, although this inequality is fundamental, additional constraints can be included in the definition in order to reflect the general notions of emergent effects in fields like condensed matter physics and system biology. 

The reader should note that while similar qualitative notions of emergence have been used in other work \citep{bar2004mathematical, gadioli2018categoriacal,   green2023emergence, mediano2022greater}, their use of these concepts to describe a formal notion of emergence falls short. As far as we know, no prior attempts have resulted in a quantity based on Definition \ref{emergencedef} that directly produces a measure of emergence. \cite{adam2017systems} attempted to address this issue, and we will discuss it in the next section by proposing emergence as "the loss of exactness", a term from homological algebra.

We will discuss the theoretical framework for measuring the difference between $\Phi(s_1\vee s_2)$ and $\Phi(s_1)\vee \Phi(s_2)$ in Section 2, and the final measure will be given in Section 6.  We hereby provide three examples, in machine learning, biology and statistical mechanics to illustrate this idea of emergence. 

\subsection{Machine Learning}

Emergence, or generativity, is an increasingly important concept in machine learning \citep[for example,][]{wei2022emergent, wei2022chain, du2024understanding}. Emergent abilities of large language models, for example, \cite{wei2022emergent}, commonly conceived as the new properties/ abilities of the larger models that do not exist in smaller models. If we consider $s_1$ and $s_2$ as two smaller models, $s_1\vee s_2$ as combining two smaller models into a larger model by, for example, techniques in ensemble learning, \citep[see, e.g.,][]{mohammed2023comprehensive}, and $\Phi$ as the mapping that reflects the properties/ abilities of the model, that is, $\Phi(s)$ is the ability acquired by the model $s$. Then  $\Phi(s_1\vee s_2)$ is the properties/ abilities of the combined model and  $\Phi(s_1)\vee \Phi(s_2)$ can be interpreted as a summation of the properties/ abilities of each small model. Then the difference between $\Phi(s_1\vee s_2)$ and $\Phi(s_1)\vee \Phi(s_2)$ can reflect the emergent properties/ abilities that result in the nonlinear increase of performance, related to the performance in \cite{wei2022emergent}. The difference can also be related to generalizability, where $s_1$ and $s_2$ are two data sets, when the model trains on two data sets, it is usually different from training the model on separate datasets.

\subsection{Biology}

In biological systems, the concept of emergence is exemplified through the self-organizing behaviors observed in social insects, such as ants or bees \cite{moffett2021ant}. The behaviors of individual ants are governed by a set of simple rules driven by genetic programming and environmental interactions. Pheromones in particular are key chemicals involved in communication and the exchange of information between ants in a colony. However, when these individuals interact as part of a larger group, the collective behavior of the colony transcends the capabilities of any single ant. The emergent properties of the colony, such as complex nest building, efficient foraging strategies, and intricate social organization, arise from the dynamic interplay and feedback loops between the ants. Each ant's contributions, though simple in isolation, collectively result in a higher-level organization that is qualitatively different from and cannot be predicted merely by understanding the behavior of individual ants. Consider $s_1$ and $s_2$ as two entities, for example, two groups of ants, and $s_1 \vee s_2$ represents the large group formed by $s_1$ and $s_2$. Let $\Phi(s)$ represent the colony structure formed by the ant group $s$. Then $\Phi(s_1\vee s_2)$ represents the colony structure built through the interaction among the ants, which exhibits complex structures compared to $\Phi(s_1)\vee \Phi(s_2)$, the colony structure built without the interaction between $s_1$ and $s_2$. The difference between $\Phi(s_1\vee s_2)$ and $\Phi(s_1)\vee \Phi(s_2)$ reflects the emergence exhibited in the example of ant colonies.

\subsection{Statistical Mechanics}

In the domain of statistical physics, the concept of emergent phenomena is profoundly illustrated by the study of phase transitions \cite{goldenfeld2018lectures}, where macroscopic changes emerge from the collective dynamics of microscopic constituents. Phase transitions, such as the crystallization of a liquid or the magnetization of a ferromagnetic material, represent a paradigmatic example of how large-scale, qualitative changes in a system's properties can arise from the cooperative interactions among its many individual components. Theoretical models, such as the Ising model for ferromagnetism, exemplify this: individual atomic spins interact locally, but when the system reaches a critical temperature, a dramatic, collective alignment occurs, manifesting as a spontaneous magnetization. Similarly, in the process of freezing, the molecular disorder of a liquid yields to an ordered crystalline structure as temperature decreases. These transitions are governed by principles like symmetry breaking and the minimization of free energy \cite{goldenfeld2018lectures}. For example, consider the Ising system, phase transition happens at the discontinuity of $M$ the total magnitization of the system as the external field $J$ varies. We can represent $s_1$ as the Ising system and $s_2$ as the system that produced the external field, and $s_1\vee s_2$ as the coupled system. We can represent $\Phi(s_1)$ as the total magnetization of the system $s_1$. Now we have that $\Phi(s_1\vee s_2)$ represents the total magnetization of the coupled system, and $\Phi(s_1)\vee \Phi(s_2)$ represents the average total magnetization of individual systems. The difference between $\Phi(s_1\vee s_2)$ and $\Phi(s_1)\vee \Phi(s_2)$ represents the susceptibility of the system under the change of external fields which captures phase transition.


~\\

These examples provide a high-level illustration of emergent effects according to our definition in Definition \ref{emergencedef}. In Section 5 and 6 we will demonstrate in our theory and numerical experiments. But first, in the next section, we formulate this approach in a more tractable way in order to produce an effective computational measure.

\section{Emergence as Loss of Exactness}

In this section, we reformulate the intuitive concepts of emergence introduced in the previous sections to derive a quantitative measure of emergence. The technical arguments necessitate an introduction to concepts of category theory and homological algebra. However, we begin by presenting the intuitive ideas.

In Definition \ref{emergencedef} we say that a system is emergent if there exist two components satisfying the condition that $\Phi(s_1\vee s_2) \neq \Phi(s_1)\vee \Phi(s_2)$. However, as discussed in the Introduction, on its own this definition of emergence is not sufficient to evaluate the amount of emergent effects that can present in a system. We need to approximate the difference when $\Phi(s_1\vee s_2)$ and $\Phi(s_1)\vee \Phi(s_2)$ are representing two non-numerical structures, for example, networks. We show in Section 5 that a mapping $\Phi$ that sends a network to its partial structure can be realized as  functors, a fundamental concept in category theory. We can use derived functors to capture the nonlinearity that exists in such functors.

\cite{adam2017systems} adopted the homological algebraic approach to describe emergent effects through interactions among the components, formulated in the language of category theory. The use of category theory in other than pure math is gaining increasing traction and producing some interesting tools and insights \citep[see, e.g.,][]{rosen1958relational, rosen1958representation, bradley2018applied, fong2018seven, northoff2019mathematics}. In \cite{adam2017systems}, the use of category theory is natural since the Abelian category is foundational to the tools and constructions of homological algebra. Importantly, \cite{adam2017systems} proved that emergence can be formulated as "loss of exactness", a notion in homological algebra. This perspective has a number of advantages for quantifying emergence, which we progressively discuss in detail in this section.  

A category is an algebraic structure consisting of objects and morphisms. We consider the collection of subsystems and their constituent components that give rise to emergent effects as the category \textbf{System}, where each object is a subsystem. The morphisms represent a relation between subsystems to be interpreted. For example, $f: V\rightarrow W$ can give a way to embed subsystem $V$ into $W$. When subsystems interact they form a new subsystem (we also call it interconnection). This new resultant subsystem is captured as the mathematical concept of colimit with respect to the interaction diagram, which is treated in detail in \cite{fong2018seven}. Colimit is an object in \textbf{System}, as defined below, that satisfies certain properties. In Section 4 we discuss the category of quiver representations as \textbf{System}  and we will give the colimit and see how it can be regarded as interconnection among objects. 

\begin{defi}
A co-cone of a diagram $F: J\rightarrow C$ in a category $C$ is an object $N$ in $C$ together with a family of morphisms
\begin{align}
\phi_X: F(X) \rightarrow N
\end{align}
for every object $X$ of $J$, such that for every morphism $f: X\rightarrow Y$ in $J$, we have $\phi_Y \circ F(f) = \phi_X$. 
\end{defi}

\begin{defi}
A colimit of a diagram $F:J\rightarrow C$ is a co-cone $(L,\phi_X)$ such that for any other co-cone $(N,\psi)$ of $F$ there exists a unique morphism $u: L\rightarrow N$ such that $u\circ \phi_X = \psi_X$ for all $X$ in $J$.
\end{defi}

There are several candidates of diagrams $F$ to model the interconnected system $s_1\vee s_2$ as the colimit of $F$. One option would be pushouts, namely when $J$ is of the form $\bullet \leftarrow \bullet \rightarrow \bullet$. The interaction of $s_1$ and $s_2$ can be regarded as the pushout diagram:

\[\begin{tikzcd}[row sep=3em]
s_0 \arrow[r] \arrow[d] & s_1 \arrow[d] \\
s_2 \arrow[r] & s_1\vee s_2
\end{tikzcd}\]
where $s_0$ can be interpreted as the part that $s_1$ and $s_2$ share in common after the interaction. In \cite{adam2017systems}, the diagram $F$ is also called the interaction blueprint, and we have the following definition for interaction in the category theory context:

\begin{defi} (Definition 8.3.2 in \cite{adam2017systems})
The system resulting from the interaction along a diagram $F$ is the object $\varinjlim F$ of the colimit of the diagram $F$. 
\end{defi}

The partial observations of the system are represented by a functor $\Phi: \textbf{System}\rightarrow \textbf{Phenome}$, where $\textbf{Phenome}$ is another category whose objects are the observations of objects in $\textbf{System}$. Now we can translate the definition \ref{emergencedef} of emergence into the category theory setting: 

\begin{defi} (Definition 8.3.5 in \cite{adam2017systems})
A functor $\Phi$ sustains emergence effects if and only if the map $\varinjlim \Phi F \rightarrow \Phi \varinjlim F$ is not an isomorphism for some diagram $F$. 
\end{defi}

 In this way, emergence has been redefined as not preserving some colimit. In category theory terms, this is equivalent to saying that the functor $\Phi$ is not a co-continuous functor. In the following, we are going to show that the extent to which colimit is not preserved is reflected as the loss of exactness. 

The framework is based on a special case of this general framework, specifically, we consider the case when both $\textbf{System}$ and $\textbf{Phenome}$ are Abelian categories, a category in which morphisms and objects can be added and in which kernels and cokernels exist and have desirable properties, for a detailed definition, see \citep{mac2013categories}. Basic examples include the category of Abelian groups, and the category of vector spaces over a field $k$. On Abelian categories,  the constructions of homological algebra are available, including exact sequences, to which we can relate an interaction process.

In \cite{freyd1964abelian} Proposition 2.5.3 the following result has been proven:
\begin{prop}
In an Abelian category, the square 
\[\begin{tikzcd}[row sep=3em]
C \arrow[r,"a"] \arrow[d,"b"] & A \arrow[d,"b'"] \\
B \arrow[r,"a'"] & P
\end{tikzcd}\]
is a pushout if and only if the sequence 
\[\begin{tikzcd}[row sep=3em]
C \arrow[r,"{(a,b)}"] &A\oplus B \arrow[r,"{(\begin{smallmatrix} b' \\ -a' \end{smallmatrix})}"] &P \arrow[r] &0
\end{tikzcd}\]
is exact. 
\end{prop}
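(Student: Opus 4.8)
The plan is to observe that the displayed three-term sequence is precisely a way of saying that the right-hand map is a cokernel, and that in an Abelian category the cokernel of $(a,b)$ is exactly the pushout of $a$ and $b$; comparing the two universal properties then yields both implications simultaneously.

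First I would record the standard reformulation: writing $p$ for the right-hand map $\binom{b'}{-a'}\colon A\oplus B \to P$, the sequence $C \xrightarrow{(a,b)} A\oplus B \xrightarrow{p} P \to 0$ is exact if and only if $p$ is a cokernel of $(a,b)$. Indeed, exactness at $P$ says $p$ is epic, exactness at $A\oplus B$ says $\ker p = \operatorname{im}(a,b)$, and in an Abelian category an epimorphism is the cokernel of the inclusion of its kernel; since $C$ covers its image in $A\oplus B$, that inclusion has the same cokernel as $(a,b)$ itself. I would also note that $p\circ (a,b) = b'a - a'b$, so the sequence being a chain complex is equivalent to the commutativity $b'a = a'b$ of the square.

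Next I would convert the universal property of the cokernel into that of the pushout using the biproduct structure of $A\oplus B$. Since $A\oplus B$ is simultaneously product and coproduct, a morphism $h\colon A\oplus B \to Q$ is the same datum as a pair $(f,g)$ with $f\colon A \to Q$ and $g\colon B \to Q$, via $f = h\iota_A$ and $g = -h\iota_B$ for the coproduct injections $\iota_A,\iota_B$. Under the identity $(a,b) = \iota_A a + \iota_B b$ this gives $h\circ(a,b) = fa - gb$, so $h$ annihilates $(a,b)$ exactly when $fa = gb$, the commuting-cocone condition. The map $p$ itself corresponds under this dictionary to the pair $(b',a')$, and the induced factorization $u\colon P \to Q$ with $up = h$ unwinds to $ub' = f$ and $ua' = g$. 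Thus the defining property of $p = \operatorname{coker}(a,b)$ is letter-for-letter the universal property of the pushout with cocone $(b',a')$, and running the correspondence in both directions establishes the equivalence.

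The step demanding the most care is the sign bookkeeping hidden in $\binom{b'}{-a'}$: the minus sign is exactly what turns the cokernel condition $h\circ(a,b)=0$, which would naively read $fa+gb=0$, into the square-commuting condition $fa=gb$, and it similarly aligns the factorization equations $ub'=f$ and $ua'=g$ with the pushout rather than a twisted variant. Keeping the injections, projections, and these signs mutually consistent is the only genuine subtlety; everything else is a formal matching of universal properties that works verbatim in any Abelian category. A reader preferring an explicit check could instead invoke the Freyd--Mitchell embedding to reduce to modules and verify both characterizations by a direct diagram chase, but the intrinsic comparison above avoids leaving the abstract setting.
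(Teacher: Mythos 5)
Your proof is correct. Note, however, that the paper does not prove this proposition at all: it is quoted verbatim from Freyd's \emph{Abelian Categories} (Proposition 2.5.3), so there is no in-paper argument to compare against. Your argument is the standard, self-contained one and is complete: you correctly reduce exactness of the three-term sequence to the statement that $\binom{b'}{-a'}$ is a cokernel of $(a,b)$ (using that in an Abelian category an epimorphism is the cokernel of its kernel, and that a morphism and its image inclusion have the same cokernel), and then you translate the cokernel's universal property into the pushout's universal property through the biproduct identification of maps out of $A\oplus B$ with pairs of maps. Your sign bookkeeping is the right thing to emphasize: the minus sign in $\binom{b'}{-a'}$, paired with your dictionary $f = h\iota_A$, $g = -h\iota_B$, is precisely what converts the condition $h\circ(a,b)=0$ into the cocone condition $fa=gb$ and the factorization $up=h$ into $ub'=f$, $ua'=g$. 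What your approach buys over the paper's citation is a proof that stays entirely inside the abstract Abelian setting; the Freyd--Mitchell alternative you mention would also work but is heavier machinery than the statement requires (and formally needs a smallness reduction), so your intrinsic comparison of universal properties is the better route.
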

Here exact is defined as

\begin{defi}
    A sequence \begin{tikzcd}
\cdots \arrow[r] & M_{n+1} \arrow[r,"f_{n+1}"] & M \arrow[r,"f_n"] & M_{n-1} \arrow[r] & \cdots
\end{tikzcd} is exact when it is exact at every term, that is, $\text{\textrm{Im}} f_{n+1} = \ker f_n$ for all $n$.  

\end{defi}

The notion of exactness is fundamental in homological algebra, and we can extend this to functors:
\begin{defi}
    A functor $\Phi$ is exact when it maps an exact sequence
\begin{equation}
\begin{tikzcd}
0 \arrow[r, "a"] &A \arrow[r,"f"] & A' \arrow[r,"g"] &A'' \arrow[r,"b"] &0
\end{tikzcd} 
\end{equation}
to another exact sequence. In other words, the following sequence is exact:
\begin{equation}
\begin{tikzcd}
0 \arrow{r}{\Phi a} & \Phi(A) \arrow{r}{\Phi f} & \Phi(A') \arrow{r}{\Phi g} & \Phi(A'') \arrow{r}{\Phi b} &0.
\end{tikzcd}  
\end{equation}
Furthermore, $\Phi$ is left exact if 
\begin{equation}
\begin{tikzcd}
0 \arrow{r}{\Phi a} & \Phi(A) \arrow{r}{\Phi f} & \Phi(A') \arrow{r}{\Phi g} & \Phi(A'')
\end{tikzcd}  
\end{equation}
is exact, and $\Phi$ is right exact if 
\begin{equation}
\begin{tikzcd}
 \Phi(A) \arrow{r}{\Phi f} & \Phi(A') \arrow{r}{\Phi g} & \Phi(A'') \arrow{r}{\Phi b} &0
\end{tikzcd}  
\end{equation}
is exact. \label{Prop_exact_emergent}
\end{defi}

In this context, \cite{adam2017systems} proved the following result:
\begin{prop} (Proposition 8.4.3 in \cite{adam2017systems})
Assume the functor $\Phi$ between two Abelian categories is additive and left exact. Then $\Phi$ sustains emergent effects if and only if for some exact sequence $A\rightarrow A'\rightarrow A'' \rightarrow 0$, the sequence $\Phi(A)\rightarrow \Phi(A') \rightarrow \Phi(A'')\rightarrow 0$ is not exact at either $\Phi(A')$ or $\Phi(A'')$.
\end{prop}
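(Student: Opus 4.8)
The plan is to prove the equivalence by exploiting the dictionary, furnished by Freyd's Proposition above, between pushout squares and right-exact sequences, using additivity of $\Phi$ to commute it past finite biproducts. The starting observation is that a right-exact sequence is \emph{exactly} the Freyd sequence attached to a cokernel, viewed as a pushout. Given an exact sequence $A \xrightarrow{f} A' \xrightarrow{g} A'' \rightarrow 0$, the object $A''$ is the pushout of the cospan $0 \leftarrow A \xrightarrow{f} A'$; instantiating Freyd's criterion with $C=A$, corners $0$ and $A'$, and $P=A''$, the associated exact sequence is, up to a sign on $g$, precisely $A \to A' \to A'' \to 0$. This will let me translate the colimit-theoretic definition of emergence into the exactness statement of the proposition.

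For the \emph{if} direction I would take an exact sequence $A \to A' \to A'' \to 0$ whose image $\Phi(A) \to \Phi(A') \to \Phi(A'') \to 0$ fails to be exact (at $\Phi(A')$ or at $\Phi(A'')$). Applying $\Phi$ to the cokernel-pushout square and using additivity to identify $\Phi(0 \oplus A') \cong \Phi(A')$, Freyd's criterion (now read in \textbf{Phenome}) says the image square is a pushout if and only if the image sequence is exact. Since the canonical comparison $\varinjlim \Phi F \to \Phi \varinjlim F$ for this diagram $F$ is an isomorphism precisely when the image square is again a pushout, the assumed failure of exactness exhibits a diagram on which $\Phi$ does not preserve the colimit; that is, $\Phi$ sustains emergent effects. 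For the \emph{only if} direction I would argue by contraposition: assume that for every exact $A \to A' \to A'' \to 0$ the image stays exact, i.e. $\Phi$ is right exact in the sense of the functorial exactness defined above. Because $\Phi$ is additive it preserves finite biproducts, and because every finite colimit in an Abelian category is assembled from finite coproducts and cokernels (a coequalizer being the cokernel of a difference), a right-exact additive functor preserves all finite colimits. Hence $\varinjlim \Phi F \to \Phi \varinjlim F$ is an isomorphism for every finite $F$ — in particular every pushout — so $\Phi$ does not sustain emergent effects.

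The step I expect to be the main obstacle is the reduction, in the forward direction, of a \emph{general} failure of co-continuity to a failure on a single right-exact (cokernel) sequence: this is exactly what permits replacing the quantifier ``for some diagram $F$'' by ``for some sequence $A \to A' \to A'' \to 0$''. The clean implication ``right exact $+$ additive $\Rightarrow$ preserves finite colimits'' requires the interaction diagrams to be finite, so that additivity absorbs the coproduct part; for infinite diagrams one would need $\Phi$ to preserve the relevant coproducts as well. I would therefore make explicit that the colimits modelling interaction are finite, which is consistent with the pushout model of $s_1 \vee s_2$. Finally I would comment on the role of left-exactness: although the core equivalence uses only additivity and Freyd's criterion, left-exactness gives $\ker(\Phi g) = \Phi(\ker g)$, so that any failure of exactness is localized at $\Phi(A')$ and is governed by whether $\Phi$ preserves the epimorphism $A \twoheadrightarrow \operatorname{Im} f$ — the obstruction subsequently measured by the first derived functor, which is why the hypothesis is retained for the development that follows.
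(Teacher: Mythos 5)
The paper never proves this proposition itself — it is imported verbatim from Adam's thesis — but the one tool the paper assembles immediately before stating it is Freyd's pushout--exactness criterion (Proposition 3.5), and that is exactly the engine of your argument, so your proof is correct and follows the route the paper's setup intends. Both directions are sound: a right-exact sequence $A\rightarrow A'\rightarrow A''\rightarrow 0$ is the Freyd sequence of the cokernel pushout of $0\leftarrow A\rightarrow A'$, Freyd's criterion read in $\textbf{Phenome}$ converts non-exactness of the image sequence into failure of the comparison map to be an isomorphism, and additivity plus preservation of cokernels gives preservation of all finite colimits for the converse. Your finiteness caveat is moreover genuinely load-bearing, not cosmetic: with Definition 3.4 quantifying over arbitrary diagrams (the paper even glosses it as co-continuity), the ``only if'' direction is false — over a field, $\Phi=\operatorname{Hom}_k(V,-)$ with $\dim V$ infinite is exact, hence preserves right-exactness of every sequence, yet the comparison map $\bigoplus_{n\in\N}\operatorname{Hom}_k(V,k)\rightarrow\operatorname{Hom}_k(V,\bigoplus_{n\in\N}k)$ is not an isomorphism. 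Restricting to finite interaction diagrams, as you do, is the only reading under which the proposition holds.

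One side claim in your closing paragraph is wrong, though nothing in the proof depends on it: left-exactness does \emph{not} localize failure of exactness at $\Phi(A')$. Take $\Phi=\operatorname{Hom}_{\Z}(\Z/2,-)$ on abelian groups (additive, left exact) and the exact sequence $\Z\xrightarrow{2}\Z\rightarrow\Z/2\rightarrow 0$: its image is $0\rightarrow 0\rightarrow\Z/2\rightarrow 0$, which is exact at $\Phi(A')=0$ but fails at $\Phi(A'')=\Z/2$, the map into it being zero rather than epic. Your identification of exactness at $\Phi(A')$ with preservation of the epimorphism $A\twoheadrightarrow\operatorname{Im}f$ is fine, but exactness at $\Phi(A'')$ is the separate condition that $\Phi$ preserve the epimorphism $A'\twoheadrightarrow A''$, and for left exact functors that is precisely the failure $R^1\Phi$ measures (here $\operatorname{Ext}^1(\Z/2,\Z)=\Z/2\neq 0$). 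This is why the proposition allows failure ``at either $\Phi(A')$ or $\Phi(A'')$''; delete or correct that remark.
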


What this proposition achieves is encoding emergent effects as a loss of exactness when a sequence in \textbf{System} is mapped into \textbf{Phenome}. We can take advantage of this fact to develop a measure of emergence that is based on a loss of exactness as developed in homological algebra, which describes the mathematical machinery to quantify loss of exactness \cite{gelfand2013methods}.

Homological algebra is often regarded as a mathematical bridge between the worlds of topology and algebra.
It is the study of homological functors and the intricate algebraic structures they entail; its development was closely intertwined with the emergence of category theory. A central concept is that of specific sequences, referred to as chain complexes, that can be studied through both their homology and cohomology. Homological algebra is capable of extracting information contained in these complexes and presenting them in the form of homological invariants of rings, modules, topological spaces, and other 'tangible' mathematical objects which have their own further structure, properties, and interpretations \cite{cartan1999homological, gelfand2013methods}. 

Within this framework of homological algebra,  the loss of exactness can be encoded in the derived functor introduced below. This approach evaluates emergence via the algebraic properties of the functor itself.
Given a function (or more generally a functor) $\Phi$ on a set (category) of networks that represents the processing of information in a system, we can extract information from the "derivative" of the function (derived functor), represented as $L^1\Phi$ or $R^1\Phi$ in the standard literature. We next show that they encode information related to emergence.

Let \textbf{System} and \textbf{Phenome} both be abelian categories and assume $\Phi$ is one-side exact. For example, in this section, we assume $\Phi$ is left exact and not right exact, thus admitting non-trivial right derived functors $R^n\Phi$. Recall that given an object $A$ in \textbf{System}, if we pick an injective resolution, which is by definition an exact sequence:

\begin{equation}
\begin{tikzcd}[row sep=3em]
0 \arrow[r] & A \arrow[r] & I_0 \arrow[r] & I_1 \arrow[r] & I_2 \arrow[r] & \cdots
\end{tikzcd}
\end{equation}
then we obtain a chain complex:
\begin{equation}
\begin{tikzcd}[row sep=3em]
0  \arrow[r] & \Phi( I_0 )\arrow[r] & \Phi(I_1) \arrow[r] & \Phi(I_2) \arrow[r] & \cdots
\end{tikzcd}	\label{complex}
\end{equation}

then $R^i \Phi(A)$ is defined as the $i$th cohomology object of the complex (similarly, $L^i \Phi(A)$ is defined as the $i$th cohomology object of the complex recovered from the projective resolution). Standard textbooks in homological albgera, for example, \cite{rotman2009introduction} have shown that they are indeed functors, and under the assumption that $\Phi$ is left-exact, we have $R^0\Phi(A) = \Phi(A)$. In Section 5 we will give a concrete definition and example that finds $L^1\Phi(A)$ and $R^1\Phi(A)$.  Now note that the chain complex (\ref{complex}) we recovered is not necessarily exact. To measure how much exactness is lost under this functor, we can take advantage of the following standard result in homological algebra \cite[see, e.g.][]{gelfand2013methods}, 
which gives us a long exact sequence,
\begin{prop}
Given an exact sequence 
\begin{equation}
\begin{tikzcd}[row sep=3em]
0  \arrow[r] &A \arrow[r] & A'  \arrow[r] & A'' \arrow[r] & 0
\end{tikzcd}
\end{equation}
in \textbf{System}, the following long exact sequence is exact:
\\
\begin{equation}\begin{tikzcd}[row sep=3em]
0  \arrow[r] &\Phi(A) \arrow[r] & \Phi(A')  \arrow[r] & \Phi(A'') \arrow[dll] \\
&R^1\Phi(A) \arrow[r] & R^1\Phi(A')  \arrow[r] & R^1\Phi(A'') \arrow[dll] \\
&R^2 \Phi(A) \arrow[r] &\cdots.
\end{tikzcd}\end{equation}
\end{prop}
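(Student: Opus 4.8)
The plan is to reduce the statement to the standard long exact cohomology sequence attached to a short exact sequence of complexes, the only genuinely new input being a compatible choice of injective resolutions for the three terms $A$, $A'$, and $A''$. Accordingly, I would first invoke the Horseshoe Lemma: given injective resolutions $0\to A\to I^\bullet$ and $0\to A''\to J^\bullet$, one builds an injective resolution $0\to A'\to K^\bullet$ of the middle term with $K^n = I^n\oplus J^n$, fitting into a short exact sequence of complexes $0\to I^\bullet \to K^\bullet \to J^\bullet \to 0$ that is split in each degree. The construction is the usual inductive diagram chase, lifting the maps $A\to A'$ and $A'\to A''$ through the chosen resolutions by repeatedly using injectivity of the $I^n$; the essential point is that, because each $K^n$ is a direct sum, the three resolutions assemble into a \emph{termwise split} short exact sequence.

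Next I would apply $\Phi$. Since $0\to I^n\to K^n\to J^n\to 0$ is split for every $n$ and $\Phi$ is additive, $\Phi$ preserves direct sums and therefore sends this to the split short exact sequence $0\to \Phi(I^n)\to \Phi(K^n)\to \Phi(J^n)\to 0$. Consequently $0\to \Phi(I^\bullet)\to \Phi(K^\bullet)\to \Phi(J^\bullet)\to 0$ is a short exact sequence of complexes, even though $\Phi$ is only one-sided exact in general; additivity is exactly what rescues exactness here because the sequence of resolutions was arranged to be split.

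I would then apply the standard long exact sequence in cohomology for a short exact sequence of complexes, obtained from the Snake Lemma applied degreewise with connecting maps $\delta^n$ defined in the usual way. This produces an exact sequence whose terms are the cohomology objects $H^n(\Phi(I^\bullet)) = R^n\Phi(A)$, $H^n(\Phi(K^\bullet)) = R^n\Phi(A')$, and $H^n(\Phi(J^\bullet)) = R^n\Phi(A'')$, by the very definition of the derived functors recalled above. Invoking $R^0\Phi = \Phi$, which holds because $\Phi$ is left exact, identifies the opening segment as $0\to \Phi(A)\to \Phi(A')\to \Phi(A'')$, and splicing the rows together along the connecting homomorphisms $\delta^n\colon R^n\Phi(A'')\to R^{n+1}\Phi(A)$ yields precisely the displayed long exact sequence.

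The step requiring the most care, and the one I would flag as the main obstacle, is establishing that the objects $R^n\Phi$ and the connecting maps $\delta^n$ are canonical, i.e. independent of the chosen resolutions. This follows from the comparison theorem for injective resolutions: any two injective resolutions of an object are chain-homotopy equivalent, and these equivalences can be chosen compatibly with the maps in the short exact sequence up to homotopy, so that all the identifications above are natural. Since the paper already treats the Snake Lemma, the Horseshoe Lemma, and the comparison theorem as standard homological machinery (citing \cite{gelfand2013methods,rotman2009introduction}), the real content is simply assembling these pieces. The only subtlety in a \emph{general} abelian category is justifying the element-chasing underlying the Snake Lemma and the definition of $\delta^n$; here I would either appeal to the Freyd--Mitchell embedding theorem to reduce to a category of modules or work directly with members (pseudo-elements), both of which make the connecting homomorphism well defined and natural.
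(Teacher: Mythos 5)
Your proposal is correct and is precisely the standard argument---Horseshoe Lemma to get a termwise split short exact sequence of injective resolutions, additivity of $\Phi$ to preserve the splitting, then the Snake Lemma long exact sequence together with the comparison theorem for well-definedness---which is exactly the proof the paper defers to by citing it as a standard result from \cite{gelfand2013methods} rather than proving it. Since the paper gives no proof of its own and simply invokes this textbook fact, your reconstruction coincides with the paper's approach.
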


This means, if $R^1\Phi(A) = 0$, the sequence
\begin{equation}
\begin{tikzcd}[row sep=3em]
0  \arrow[r] &\Phi(A) \arrow[r] & \Phi(A')  \arrow[r] & \Phi(A'') \arrow[r] &0
\end{tikzcd} 
\end{equation}
is right exact, which means no emergence is sustained. Thus the first derived functor $R^1\Phi$ encodes the potential of a system to sustain emergent effects. 

This observation provides the foundation for us to mathematically compute the emergence of systems of interest by connecting emergence with a derived functor. The emergence of a specific system $A$ will be determined by the image of this system under the derived functor, represented as $R^1\Phi(A)$. Note that this gives us a global representation of emergence, not specific to the choice of interacting parts $s_1, s_2$ in definition \ref{emergencedef}.  Back to the analogy to derivatives given in this section, we can see that, when $R^1\Phi(A) = 0$ for any $A$ in $\textbf{System}$ category, that means there will be no loss of exactness, hence $\Phi(s_1\vee s_2) = \Phi(s_1)\vee \Phi(s_2)$. In calculus, if $\Phi$ is a function, this equality suggests that $\Phi$ is linear, with zero second-order derivative. Hence we can see that the first-order derived functor $R^1\Phi$ is analogous to the second-order derivative of $\Phi$. 

In this section we assumed $\Phi$ to be left exact, and thus the first right derived functor $R^1\Phi$ encodes  emergence. In the case when $\Phi$ is right exact, it will admit left derived functors $L^n\Phi$ and now $L^1\Phi$ encodes the emergence. We will discuss in detail in Section 5, where we compute $L^1\Phi$ and $R^1\Phi$ explicitly. 

In this section we introduced the homological algebra framework to measure emergence through derived functor, which is an analogue of derivatives in calculus. We note that derived functor can indeed be a generalization of derivatives: derived functor measures the extent to which a functor fails to preserve interactional effect (realized as colimit), analogous to how second-order or higher-order derivatives measure the extent to which a function fails to be linear. The nonlinearty in the functor's behavior, encoded in the derived functor $R^1\Phi(A)$, reflects the nonlinearity in the system's behavior as we go from single components to parts of the system to system itself, hence captures the concept of emergence. 


We need to introduce a particular constraint to this framework before we can use it to study the emergence of any specific system. In the analysis above, we necessarily assumed that the \textbf{System} category is an Abelian category, or is lifted to an Abelian category, because these types of categories have properties that support the formation of exact sequences. From a practical perspective, in order to implement this constraint, in the next two sections we consider the more general category \textbf{System} introduced in Section 2 to be the category of quiver representations. It can be used to model network structures. It is an Abelian category and its homological algebra has been developed \cite[e.g.][]{derksen2017introduction}.

\section{Quiver Representations}

In this section, we introduce quiver representation as an effective way to represent network systems and a bridge that connects the abstract homological algebra framework in the previous section with an applicable measure of emergence. As a pre-step to compute $L^1\Phi(A)$ or  $R^1\Phi(A)$ which encodes the emergence of a system $A$, we realize the category \textbf{System} as the category of quiver representations $\textbf{Rep}(Q)$, where each object in \textbf{System} is a quiver representation of a quiver $Q$. A quiver representation is the mathematical structure obtained by considering network nodes as vector spaces and edges as linear maps. Related work by other authors has explored the connection between quiver representations, neural networks, and data representations \citep[see, e.g.,][]{armenta2021representation, parada2020quiver, ganev2022quiver}.

The use of quiver representations as the \textbf{System} category has the following advantages: first, the category of quiver representations $\textbf{Rep}(Q)$ is itself an Abelian category, the kind of category on which we can carry out homological algebra constructions. Second, plenty of mathematical theories on quiver representations and quiver varieties have been developed that can potentially be applied to the formulation of neural networks \citep[see, e.g.,][]{kirillov2016quiver, jeffreys2022kahler}. Third, the representation of nodes and edges can potentially be used to model information flow on the network and encode the biophysics that govern those networks.

A \textbf{quiver} is a directed graph where loops and multiple arrows between two vertices are allowed, and the representation of a quiver associates nodes with vector spaces and arrows with linear maps (see Figure \ref{quivereg1}, \ref{quivereg2}). Formally, they are defined as follows:

\begin{defi} A quiver $Q$ is a quadruple $Q = (Q_0,Q_1,h,t)$ where $Q_0$ is a finite set of vertices, $Q_1$ is a finite set of arrows, and $h$ and $t$ are functions $Q_1 \rightarrow Q_0$. For an arrow $a\in Q_1$, $h(a)$ and $t(a)$ are called the head and tail of $a$. 
\end{defi}

\begin{defi} We get a \textbf{representation} $V$ of $Q = (Q_0,Q_1,h,t)$ if we attach to every vertex $x\in Q_0$ a finite dimensional $\C$-vector space $V(x)$ and to every arrow $a\in Q_1$ a $\C$-linear map $V(a): V(ta) \rightarrow V(ha)$. 
\end{defi}
 
\begin{figure}[H]
\begin{center}
   \includegraphics[scale=0.5]{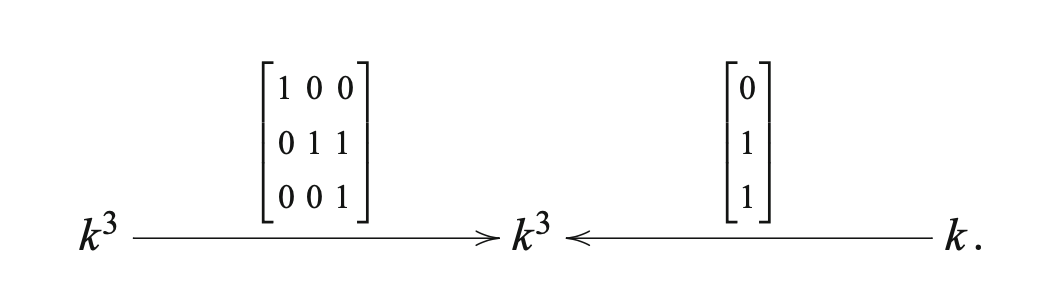} 
       
   \caption{A simple example of representation of quiver $\bullet \rightarrow \bullet \leftarrow \bullet$ over a field $k$. Adapted from \cite{schiffler2014quiver}.}
   \label{quivereg1}
   \end{center}

\end{figure}
\begin{figure}[H]
\begin{center}
   \includegraphics[scale=0.25]{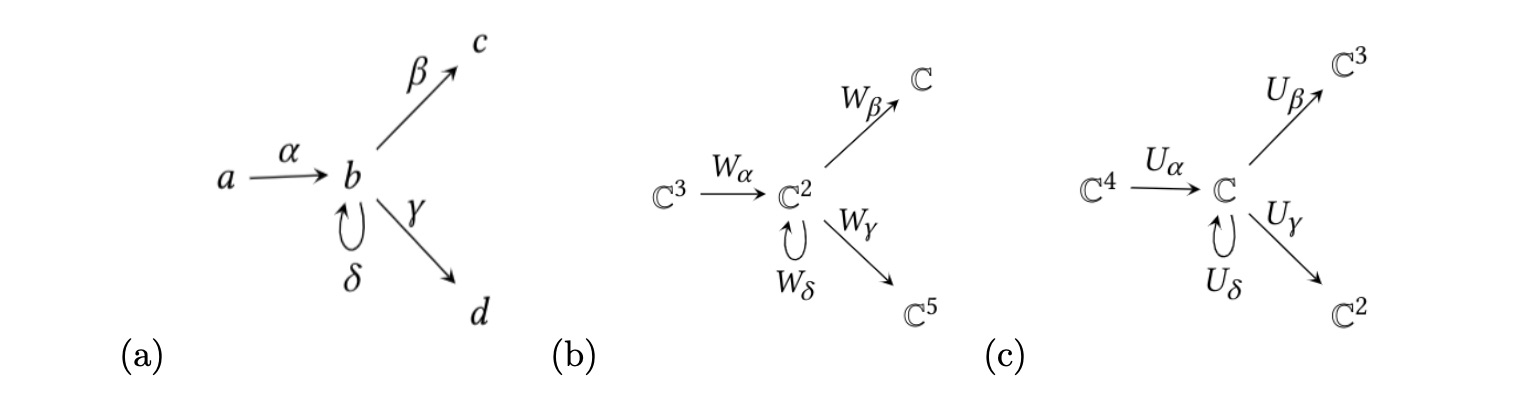} 
   \caption{Additional examples of quivers. (a): A quiver $Q$ with vertices $V = \{a,b,c,d\}$ and oriented edges $E = \{\alpha, \beta, \gamma, \delta\}$, (b) and (c): two quiver representations over $Q$. Adapted from \cite{armenta2021representation}. }
   \label{quivereg2} 
   \end{center}

\end{figure}
We now want to formulate the collection of quiver representations into a category in order to apply the homological algebra tools. The category $\textbf{Rep}(Q)$ is defined where the objects are representations of $Q$. For any two representations $V$ and $W$ which are representations of quiver $Q = (Q_0,Q_1,h,t)$, we define a morphism $\phi:V\rightarrow W$ by attaching to every vertex $x\in Q_0$ a $\C$ linear map $\phi(x):V(x)\rightarrow W(x)$ such that for every $a\in Q_1$ the diagram  
\begin{equation}
\begin{tikzcd}
V(ta) \arrow[r, "V(a)"] \arrow[d,"\phi(ta)"']& V(ha)\arrow[d,"\phi(ha)"] \\
W(ta) \arrow[r,"W(a)"'] & W(ha)
\end{tikzcd}
\end{equation}
commutes, i.e., $\phi(ha)V(a) = W(a)\phi(ta)$.

\begin{defi} A \textbf{path}  $p$ in a quiver $Q = (Q_0,Q_1,h,t)$ of length $l\geq 1$ is a sequence $p = a_la_{l-1}\cdots a_1$ of arrows in $Q_1$ such that $ta_{i+1} = ha_i$ for $i = 1,2,\ldots, l-1.$ We define $h(p) = hp = ha_l$ and $t(p) = tp = ta_1.$ For every $x\in Q_0$, we introduce a trivial path $e_x$ of length $0$. We define $he_x = te_x = x$.
\end{defi}

\begin{defi} The \textbf{path algebra} $\C Q$ is a $\C$-algebra with a basis labeled by all paths in $Q$. We denote by $\langle p\rangle$ the element of $\C Q$ corresponding to the path $p$ in $Q$. The multiplication in $\C Q$ is given by 
\end{defi}

\begin{align}
\langle p\rangle \cdot \langle q\rangle =     \left\{ \begin{array}{rcl}
         \langle pq\rangle & \mbox{if}
         & tp = hq, \\ 
         \\
         0  & \mbox{if} &tp \neq hq.
                         \end{array}\right.
\end{align}
Here $pq$ denotes the concatenation of paths, and we use the conventions $pe_x = p$ if $tp = x$, and $e_xp = p$ if $hp = x$.

\begin{prop} The categories $\textbf{Rep}(Q)$ and $\C Q$-mod are equivalent. 
\end{prop}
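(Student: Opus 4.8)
The plan is to construct a pair of functors $F \colon \textbf{Rep}(Q) \to \C Q\text{-mod}$ and $G \colon \C Q\text{-mod} \to \textbf{Rep}(Q)$ and exhibit natural isomorphisms $GF \cong \mathrm{id}$ and $FG \cong \mathrm{id}$, so that the two functors are mutually quasi-inverse. The single fact that drives the whole argument is that the trivial paths $\{e_x\}_{x \in Q_0}$ form a complete set of orthogonal idempotents in $\C Q$: the multiplication rule gives $\langle e_x \rangle \langle e_y \rangle = \delta_{xy} \langle e_x \rangle$, and because $Q_0$ is finite we have $\sum_{x \in Q_0} \langle e_x \rangle = 1$, the unit of the path algebra.

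To build $F$, given a representation $V$ I would take as underlying vector space $F(V) = \bigoplus_{x \in Q_0} V(x)$, let $\langle e_x \rangle$ act as the projection onto the summand $V(x)$, and let a basis path $\langle p \rangle$ with $p = a_l \cdots a_1$ act as the composite $V(a_l) \circ \cdots \circ V(a_1)$ on the summand $V(tp)$ and as zero elsewhere. I would then check this extends linearly to a unital $\C Q$-action that respects multiplication: concatenable paths compose and non-concatenable paths annihilate, matching the defining rule for $\langle p \rangle \langle q \rangle$ verbatim. On a morphism $\phi \colon V \to W$ I set $F(\phi) = \bigoplus_x \phi(x)$; the commuting square $\phi(ha) V(a) = W(a) \phi(ta)$ is exactly the assertion that $F(\phi)$ commutes with the action of every arrow, hence with all of $\C Q$, so $F(\phi)$ is a module homomorphism and $F$ is functorial.

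To build $G$, given a module $M$ I would set $V(x) = \langle e_x \rangle M$ and, for an arrow $a$, define $V(a) \colon \langle e_{ta} \rangle M \to \langle e_{ha} \rangle M$ as left multiplication by $\langle a \rangle$; this indeed lands in $\langle e_{ha} \rangle M$ because $\langle e_{ha} \rangle \langle a \rangle \langle e_{ta} \rangle = \langle a \rangle$. A module map $f \colon M \to N$ commutes with each $\langle e_x \rangle$, so it restricts to maps $\langle e_x \rangle M \to \langle e_x \rangle N$, and since it commutes with each $\langle a \rangle$ these restrictions constitute a morphism of representations. The orthogonal idempotent decomposition $M = \bigoplus_{x \in Q_0} \langle e_x \rangle M$ then makes both round trips transparent: $GF(V)$ returns $\langle e_x \rangle \big( \bigoplus_y V(y) \big) = V(x)$ with the same arrow maps, and $FG(M) = \bigoplus_x \langle e_x \rangle M = M$ with the same action, the comparison maps in each case being the evident canonical isomorphisms.

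I expect the content here to be bookkeeping rather than genuine difficulty, and the one point that must be handled with care is that the action defined by $F$ really is a \emph{unital} module structure — this is precisely where $\sum_x \langle e_x \rangle = 1$, and hence the finiteness of $Q_0$, is used; for infinite $Q_0$ the algebra $\C Q$ is non-unital and the statement would require modification. Checking that the isomorphisms $GF \cong \mathrm{id}$ and $FG \cong \mathrm{id}$ are natural is routine once the idempotent decomposition is available, so I would state the comparison maps explicitly and verify the naturality square on a single arrow, which suffices by linearity and functoriality.
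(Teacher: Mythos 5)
Your proof is correct, and it is essentially the canonical argument: the paper itself does not prove this proposition but only cites Proposition 1.5.4 of Derksen and Weyman, and what you have written out—trivial paths $\{\langle e_x\rangle\}_{x\in Q_0}$ as a complete set of orthogonal idempotents, the two mutually quasi-inverse functors $F$ and $G$, and the observation that finiteness of $Q_0$ is exactly what makes $\sum_x \langle e_x\rangle$ a unit so that the $\C Q$-action is unital—is precisely the proof given in that reference. One convention worth stating explicitly: since the paper's $\textbf{Rep}(Q)$ consists of \emph{finite-dimensional} representations, $\C Q$-mod must be read as the category of modules that are finite-dimensional over $\C$ (this matters when $Q$ has loops or cycles, for then $\C Q$ itself is infinite-dimensional); your $F$ and $G$ visibly preserve finite-dimensionality because $Q_0$ is finite, so the equivalence restricts correctly to these subcategories.
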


This result is proven as Proposition 1.5.4 in \cite{derksen2017introduction}. It implies that every quiver representation, including network structures, can be interpreted as a module, an algebraic structure. The category of quiver representations, therefore, behaves similarly as the category of modules. Hence, $\textbf{Rep}(Q)$ is indeed an Abelian category, \cite[see,][]{derksen2017introduction}, so the constructions of homological algebra are available. We can then use tools primarily from linear algebra and graph theory, to compute emergence as the algebraic structure arising from the interactions between quiver representations.  

It would be helpful at this point to give an example of the interaction between quiver representations, which is $V\vee W$, as the colimit (pushout) described in Section 3.  We glue $V$ and $W$ together based on the following diagram:

\begin{equation}
\begin{tikzcd}
V \wedge W \arrow[r, "f"] \arrow[d, "g"'] & V \arrow[d] \\
W \arrow[r]                               & V\vee W    
\end{tikzcd}
\end{equation}
where $V\wedge W$ is the representation that $V$ and $W$ share in common in the inter-connected system $V\vee W$. Consider the quiver $\bullet \rightarrow \bullet$, its representation $V$ as $\C\oplus \C \rightarrow \C \oplus \C$, $W$ as $\C\oplus \C \rightarrow \C \oplus \C$, and $V\wedge W$ as $\C \rightarrow \C$, and $f,g$ are inclusions, where $f: V\wedge W \rightarrow V$ embed to the second component of $V$ while $g: V\wedge W \rightarrow W$ embed to the first component of $W$. Then $V\vee W$ will be $\C\oplus \C \oplus \C \rightarrow \C \oplus \C\oplus \C$, as if we glued $V$ and $W$ together at $V\wedge W$. 


\section{Computing the Derived Functors}

In Section 3 we introduced how the cohomology object $R^1\Phi(A)$ is able to encode the emergence of system $A$ in a non-trivial way. In this section we discuss how it is computed, resulting in Proposition \ref{prop55} that leads to a computational measure of emergence.  

The first step is to find the projective/injective resolutions associated with a quiver representation. Let $Q = (Q_0,Q_1,h,t)$ be an acyclic quiver. We investigate the structure of projective and injective resolutions in $\textbf{Rep}(Q)$. For $x\in Q_0$, define $\C Q$-modules  $P_x: = \C Q e_x$ and $I_x  = (e_x \C Q)^*$, the dual space of $e_x \C Q$ defined as ${\rm Hom}(e_x \C Q,\C)$. Here $ \C Q e_x$ gives us the algebra whose basis is given by the paths starting from node $x$.
   
From \cite{derksen2017introduction}, for representation $V$ in $\textbf{Rep}(Q)$  we have the projective resolution 
\begin{equation}
   \begin{tikzcd}[row sep=3em]
0   &V \arrow{l} & \bigoplus\limits_{x\in Q_0} V(x) \otimes P_x  \arrow{l}[swap]{f^V} & \bigoplus\limits_{a\in Q_1}  V(ta) \otimes P_{ha}  \arrow{l}[swap]{d^V} & 0 \arrow[l]
\end{tikzcd}	\label{reso_1}
\end{equation}
where 
\begin{align}
  f^V: \bigoplus\limits_{x\in Q_0}  V(x) \otimes P_x \rightarrow V
  \end{align}
is defined by 
\begin{align}
f^V(v\otimes p) = p\cdot v,
\end{align}
and 
\begin{align}
d^V: \bigoplus\limits_{a\in Q_1}  V(ta) \otimes P_{ha} \rightarrow \bigoplus\limits_{x\in Q_0}  V(x) \otimes P_x \label{dv}
\end{align}
is defined by 
\begin{align}
d^V(v\otimes p) =  (a\cdot v)\otimes p- v\otimes pa.
\end{align}

And dually, for representation $W$ in $\textbf{Rep}(Q)$, an injective resolution is
\begin{equation}
   \begin{tikzcd}[row sep=3em]
0  \arrow[r] &W \arrow[r,"f_W"] & \bigoplus\limits_{x\in Q_0} I_x \otimes W(x)  \arrow[r,"d_W"] & \bigoplus\limits_{a\in Q_1} I_{ta} \otimes W(ha)  \arrow[r] & 0 
\end{tikzcd}	\label{reso_2}
\end{equation}
   where 
  \begin{align}
  f^*_W: \bigoplus\limits_{x\in Q_0}  W(x)^* \otimes I_x^* \rightarrow W^*
  \end{align}
is defined by 
\begin{align}
f^*_W(p\otimes w^*) = p\cdot w^*,
\end{align}
and 
\begin{align}
d^*_W: \bigoplus\limits_{a\in Q_1}  W(ha)^* \otimes I_{ta}^* \rightarrow \bigoplus\limits_{x\in Q_0}  W(x)^* \otimes I_x^*
\end{align}
is defined by 
\begin{align}
d^*_W(w^*\otimes p) = w^*\otimes ap - (a\cdot w^*)\otimes p.
\end{align}
Here the dual $W^*$ of a quiver representation $W$ in $\textbf{Rep}(Q)$ can be understood as a representation of the opposite quiver $Q^{op}$, obtained by reversing the direction of each arrow in $Q$. 

\begin{prop}
The complexes represented by (\ref{reso_1}) and (\ref{reso_2}) are exact. 
\end{prop}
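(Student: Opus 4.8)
The plan is to establish exactness of the projective resolution (\ref{reso_1}) by an explicit, elementary argument, and then to deduce exactness of the injective resolution (\ref{reso_2}) from it by duality; I treat (\ref{reso_1}) as the substantive case, since both terms $\bigoplus_{x} V(x)\otimes P_x$ and $\bigoplus_{a} V(ta)\otimes P_{ha}$ are visibly projective (each is a direct sum of copies of the projective $\C Q$-modules $P_x = \C Q e_x$), so only the exactness of the underlying sequence needs proof. Throughout I use that $Q$ is acyclic, so each $P_x$ is finite dimensional with basis the finitely many paths $p$ with $tp = x$, and I write a general element of $\bigoplus_{x} V(x)\otimes P_x$ as $\sum_p v_p\otimes p$ with $v_p\in V(tp)$.

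Three things must be checked: that $f^V$ is surjective, that $d^V$ is injective, and that $\ker f^V = {\rm im}\, d^V$. Surjectivity is immediate, since $f^V(v\otimes e_x) = e_x\cdot v = v$ for $v\in V(x)$, so the trivial-path summands already surject onto $V = \bigoplus_x V(x)$. The inclusion ${\rm im}\, d^V\subseteq \ker f^V$ is the routine computation $f^V d^V(v\otimes p) = p\cdot(a\cdot v) - (pa)\cdot v = 0$, using associativity of the $\C Q$-action on $V$ (here $v\in V(ta)$, $p\in P_{ha}$, and $pa$ is the concatenation).

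The key device for the remaining two points is the factorization of paths: every path $p$ with $tp = x$ and length $\ge 1$ is uniquely $p = qa$, where $a$ is the first arrow traversed (so $ta = x$) and $q$ is a path with $tq = ha$. Reading the defining formula $d^V(v\otimes q) = (a\cdot v)\otimes q - v\otimes qa$ then shows that, modulo ${\rm im}\, d^V$, any term $v\otimes qa$ supported on a positive-length path can be replaced by $(a\cdot v)\otimes q$, whose path component $q$ is strictly shorter. Iterating this reduction (which terminates because $Q$ is acyclic) rewrites an arbitrary element, modulo ${\rm im}\, d^V$, as one supported only on trivial paths, $\sum_x u_x\otimes e_x$ with $u_x\in V(x)$; since $f^V d^V = 0$, the reduction preserves the value of $f^V$, which on such an element equals $\sum_x u_x\in \bigoplus_x V(x) = V$. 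Hence if the original element lies in $\ker f^V$, then $\sum_x u_x = 0$, forcing each $u_x = 0$ by directness of the sum, so the element lies in ${\rm im}\, d^V$; this gives $\ker f^V\subseteq {\rm im}\, d^V$ and, with the previous inclusion, exactness at the middle term. For injectivity of $d^V$ I compare leading terms: among the summands of $d^V(\eta)$ for $\eta = \sum_{a,q} v_{a,q}\otimes q$, only the pieces $-v_{a,q}\otimes qa$ raise the path length, and $(a,q)\mapsto qa$ is injective because $a$ is recovered as the first arrow of $qa$; so a nonzero $\eta$, restricted to its longest paths $q$, produces nonzero terms on the maximal-length paths $qa$ that cannot cancel, whence $d^V(\eta)\ne 0$.

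Finally, exactness of the injective resolution (\ref{reso_2}) follows by applying the duality $D = {\rm Hom}(-,\C)$, a contravariant exact functor that identifies $\textbf{Rep}(Q)$ with $\textbf{Rep}(Q^{op})$ and interchanges the projectives $P_x$ with the injectives $I_x = (e_x\C Q)^*$: up to this identification, (\ref{reso_2}) for $W$ is the dual of the projective resolution (\ref{reso_1}) for $W^* \in \textbf{Rep}(Q^{op})$, and since $D$ is exact on finite-dimensional spaces it carries exactness to exactness. I expect the main obstacle to be the exactness at the middle term of (\ref{reso_1}); surjectivity, the composite being zero, and injectivity are all short, but the equality $\ker f^V = {\rm im}\, d^V$ is what genuinely uses the structure of the path algebra, and the cleanest route is the path-length reduction above rather than a dimension count, which would require separately computing the dimensions of the relevant kernels and images.
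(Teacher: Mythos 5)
Your proof is correct, but it takes a genuinely different route from the paper: the paper's entire proof of this proposition is a citation of Proposition 2.3.4 in \cite{derksen2017introduction}, whereas you establish the result from scratch. Your argument for (\ref{reso_1}) is sound at every step: surjectivity of $f^V$ from the trivial-path summands, $f^V d^V = 0$ from associativity of the $\C Q$-action, $\ker f^V \subseteq {\rm im}\, d^V$ by the reduction $v\otimes qa \equiv (a\cdot v)\otimes q$ modulo ${\rm im}\, d^V$ (terminating because a finite acyclic quiver has bounded path length), and injectivity of $d^V$ by isolating the top-length homogeneous component of $d^V(\eta)$ together with injectivity of $(a,q)\mapsto qa$, which rules out cancellation. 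Your deduction of (\ref{reso_2}) by dualizing the projective resolution of $W^*$ over $Q^{op}$ also dovetails with the paper's own presentation, since the paper defines the maps of (\ref{reso_2}) precisely through their duals $f^*_W$ and $d^*_W$. As for what each approach buys: the paper's citation is economical and defers to the standard reference in its full generality, while your self-contained proof makes explicit exactly where the hypotheses enter --- finiteness and acyclicity of $Q$ for the termination of the reduction and for $\dim P_x < \infty$, and finite-dimensionality again so that ${\rm Hom}(-,\C)$ is exact and exchanges the projectives $P_x$ with the injectives $I_x$ --- and the filtration-by-path-length device you use is the same structure that underlies the path-counting interpretation of $\dim P_x$ on which the paper's numerical measure in Section 6 rests.
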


\begin{proof}
This is proved in \cite{derksen2017introduction} Proposition 2.3.4.

\end{proof}

Following the definition of derived functors in Section 3, the left derived functor $L^n\Phi$ and the right derived functor $R^n\Phi$ can be computed from the following sequences, which are the image of the resolutions under the functor $\Phi$:
\begin{equation}
   \begin{tikzcd}[row sep=3em]
0   &\Phi(V) \arrow{l} & \bigoplus\limits_{x\in Q_0} \Phi(V(x) \otimes P_x)  \arrow{l}[swap]{\Phi f^V} & \bigoplus\limits_{a\in Q_1}  \Phi(V(ta) \otimes P_{ha})  \arrow{l}[swap]{\Phi d^V} & 0 \arrow[l]
\end{tikzcd}	\label{reso_1_im}
\end{equation}	
when $\Phi$ is right exact and 
\begin{equation}
 \begin{tikzcd}[row sep=3em]
0  \arrow[r] &\Phi(W) \arrow[r,"\Phi f_W"] & \Phi(\bigoplus\limits_{x\in Q_0} I_x \otimes W(x)  \arrow[r,"\Phi d_W"]) & \Phi(\bigoplus\limits_{a\in Q_1} I_{ta} \otimes W(ha) ) \arrow[r] & 0 
\end{tikzcd}	\label{reso_2_im}
\end{equation}
when $\Phi$ is left exact.  
Following \cite{rotman2009introduction}, this yields
\begin{align}
    L^0\Phi(V) &= H^0 I^\bullet(V) = \Phi(V)\\
    L^1\Phi(V) &= H^1 I^\bullet(V) = \ker \Phi d^V \label{eq1}
\end{align}
when $\Phi$ is right exact and 
\begin{align}
    R^0\Phi(W) &= H^0I^\bullet(W) = \Phi(W)\\
    R^1\Phi(W) &= H^1 I^\bullet(W) = \Phi(\bigoplus\limits_{a\in Q_1} I_{ta} \otimes W(ha) )/ {\rm Im} d_W \label{eq2}
\end{align}
when $\Phi$ is left exact. 

This means to evaluate the emergence of a system, we need to solve equation (\ref{eq1}) or (\ref{eq2}), which will be given in Proposition 5.3.

Now we discuss our choice of functors $\Phi$. Given a class of functors $\{\Phi_\delta\}$, the goal is to find those that could sustain emergent effects, and especially those with stronger potentials for emergent effects. The class of functors $\{\Phi_\delta\}$ of interest will often neglect the partial structure of the system, which resembles information processing systems where some connections are neglected, a ubiquitous class in both natural and engineered systems \citep{mengistu2016evolutionary,udden2020hierarchical}.

To construct edge-deleting functors that sustains emergent effect, we build on the following idea: let $e$ be the edge to be neglected, let $te$ be the source node of $e$ and $he$ be the target node of $e$. Then the functor maps $W(te)$ to its subspace $\ker W(e)$ or the functor maps $W(he)$ to its subspace $W(he)/\text{Im}W(e)$, and then maps $W(e)$ to zero map. Additionally, since mapping the vector space to its subspace may cause other arrow maps to loose domain/image, the functor maps those arrow maps to zero. For $e'$ such that its domain or image overlaps with $W(te)/\ker W(e)$ or with $\text{Im}W(e)$, $W(e')$ is mapped to zero map.

We can see that such functor can indeed sustain emergent effect. Let $V,W$ both be $\C\rightarrow \C$, $f:V\rightarrow W$ takes the following form:
\begin{center}
\begin{equation}
\begin{tikzcd}
\C \arrow[rr, "f_1"] \arrow[d, "V(a)"] &  & \C \arrow[d, "W(a)"] \\
\C \arrow[rr, "0"]                &  & \mathbb{C}         
\end{tikzcd}
\end{equation}
\end{center}

where $a$ is the edge in the quiver, and suppose $f_1: \ker V(a) \oplus ( \C/\ker V(a)) \rightarrow \ker W(a) \oplus (\C/\ker W(a))$, is constructed in the following way:

\begin{center}
\begin{equation}
\begin{tikzcd}
\ker V(a) \arrow[r, "0"] \arrow[d, "\oplus", phantom] &  \ker W(a) \arrow[d, "\oplus", phantom] \\
\mathbb{C} / \ker V(a) \arrow[r, "i_1"]                &   \mathbb{C}/\ker W(a)                  
\end{tikzcd}
\end{equation}\label{morph}
\end{center}

where $i_1$ is the injection map from $\C/\ker V(a)$ to $\C/\ker W(a)$, suppose $\C/ \ker V(a)\subset \C/ \ker W(a)$. Then $\Phi(f_1): \ker V(a) \rightarrow \ker W(a)$ will be zero map while $f_1$ is not a zero map. As a consequence, $\Phi$ will not preserve cokernel, hence not right exact \cite[see, e.g.,][]{rotman2009introduction}, sustaining  emergent effect (see, Proposition 3.8).

We now show that $\Phi$ is indeed a functor.  We need to show that $\Phi(id_X) = id_{\Phi(X)}$ for every $X$ in the category, and $\Phi(g\circ f) = \Phi(g) \circ \Phi(f)$ for all morphisms $f: X\rightarrow Y$ and $g: Y\rightarrow Z$.  The first condition is satisfied, since $\Phi(id_X)$ consists of identity maps on vector spaces in $\Phi(X)$, and so is $id_{\Phi(X)}$. The second condition is also satisfied. Consider the example below, focusing on a single node, let $f: \C \rightarrow \C$ and $g: \C \rightarrow \C$, defined on $V(ta)$ and $W(ta)$ respectively. As $\Phi$ maps $V(ta)$ to $\ker V(a)$, $\Phi(g\circ f)$ will be the restriction of $g\circ f$ at $\ker V(a)$. For $x\in \ker V(a)$, $\Phi(g\circ f)(x) = g\circ f(x) = g(f(x))$, on the other hand, $\Phi(g)\circ \Phi(f)(x) = \Phi(g)(\Phi(f)(x)) = \Phi(g)(f(x))$. Since $\Phi(g)$ is the restriction of $g$ at $\ker W(a)$, we now need to show that $f(x)\in \ker W(a)$. The morphisms by definition has to satisfy the commutative diagram with the arrow map, so for $x\in \ker V(a)$, we must have $W(a)\circ f(x) = h \circ V(a) =  0$, which means $f(x)\in \ker W(a)$. Hence $\Phi(g)(f(x)) = g(f(x))$, and we conclude $\Phi(g\circ f) = \Phi(g)\circ \Phi(f)$.
\begin{center}
\begin{equation}
\begin{tikzcd}
\C \arrow[rr, "f"] \arrow[d, "V(a)"] &  & \C \arrow[d, "W(a)"] \arrow[rr, "g"] & &\C  \arrow[d]\\
\C \arrow[rr, "h"]                &  & \mathbb{C} \arrow[rr] && \mathbb{C}       
\end{tikzcd}
\end{equation}
\end{center}


Formally, also consider the multiple edges case, the functor we constructed can be defined in the following way: given a set $E$ of edges to be neglected, we consider the functors:  

\begin{itemize}
\item $\Phi_l: \textbf{Rep}(Q) \rightarrow \textbf{Rep}(Q)$, which maps $V(x)$ to $V(x)/ \bigcup_{he = x, e\in E} \text{Im} V(e)$, where $\bigcup_{he = x, e\in E} \text{Im} V(e)$ is the union of images of linear maps associated to all edges in $E$ with head $x$ (coming into $x$). $\Phi_l$ maps $V(e), e\in E$ to zero maps. For any edge $e'$ such that its domain or image overlaps with $\bigcup_{he = x, e\in E} \text{Im} V(e)$, $V(e')$ is mapped to zero map. For morphisms, $\Phi_l$ maps the morphisms between quiver representations $f:V\rightarrow W$ to their restrictions on $\Phi_l(V)$, $f|_{\Phi_l(V)}: \Phi_l(V) \rightarrow \Phi_l(W)$.

\item $\Phi_r: \textbf{Rep}(Q) \rightarrow \textbf{Rep}(Q)$, which maps $V(x)$ to $ \bigcap_{te = x, e\in E} \ker V(e)$, the intersection of images of linear maps associated to all edges in $E$ with tail $x$ (coming from $x$). $\Phi_l$ maps $V(e), e\in E$ to zero maps. For any edge $e'$ such that its domain or image overlaps with $V(x)/\bigcap_{te = x, e\in E} \ker V(e)$, $V(e')$ is mapped to zero map. For morphisms, $\Phi_r$ maps the morphisms between quiver representations $f:V\rightarrow W$ to their restrictions on $\Phi_r(V)$, $f|_{\Phi_l(V)}: \Phi_r(V) \rightarrow \Phi_r(W)$. 
\end{itemize}

We can see that $\Phi_l$ and $\Phi_r$ equivalently delete edges in $E$ and potentially some other affected edges from a quiver representation. We use $E'$ to represent all edges in $V$ whose arrow maps are mapped to zero by $\Phi_l$ or $\Phi_r$. In practice, the quiver representation can be constructed to eliminate overlaping domains/images, such that $E = E'$.  In the following we work on the category of quiver representation $\textbf{Rep}(Q)$ where we assume $Q$ is a finite quiver with non-trivial edge set, and the functor $\Phi_l$ and $\Phi_r$ exists.

Note that from a modeling perspective, these functors are also related to the mapping from one scale to a higher scale in a real-world system. When doing a higher-level partial observation, some information about the lower-level system will be missing, for example, some edges and nodes in the network. Such functor might be further modified into one that represents a coarse-graining scheme or a convolutional architecture.

\begin{prop}
$\Phi_l$ is left exact and $\Phi_r$ is right exact.
\end{prop}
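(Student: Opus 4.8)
The plan is to reduce the statement to a pointwise computation in the category of $\C$-vector spaces and then run a short diagram chase. Recall that $\textbf{Rep}(Q)$ is equivalent to $\C Q$-mod and is therefore abelian, with kernels, cokernels, and exactness all computed vertexwise; hence a sequence $0 \to A \to B \to C \to 0$ in $\textbf{Rep}(Q)$ is short exact if and only if $0 \to A(x) \to B(x) \to C(x) \to 0$ is short exact in $\C$-vector spaces for every $x \in Q_0$. Since $\Phi_l$ and $\Phi_r$ act as the identity at every vertex not touched by an edge of $E$ (sending only the affected edge maps to zero), it suffices to verify one-sided exactness vertex by vertex: at the head vertices for $\Phi_l$ and at the tail vertices for $\Phi_r$. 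This localizes the whole problem to the behaviour of the quotient $V(x) \mapsto V(x)/\bigcup_{he=x} \mathrm{Im}\,V(e)$ and of the subspace $V(x) \mapsto \bigcap_{te=x}\ker V(e)$ under a pointwise short exact sequence.

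For $\Phi_l$ I would apply the functor to $0 \to A \to B \to C \to 0$ and read off the induced sequence of quotients at a head vertex $y$, namely $\Phi_l(A)(y) \to \Phi_l(B)(y) \to \Phi_l(C)(y)$. The two maps are those induced by $\alpha_y$ and $\beta_y$ on the quotients, well defined precisely because a morphism of representations commutes with the edge maps, so that $\alpha_y$ and $\beta_y$ carry $\mathrm{Im}\,A(e)$ and $\mathrm{Im}\,B(e)$ into $\mathrm{Im}\,B(e)$ and $\mathrm{Im}\,C(e)$ respectively. I would then aim at left exactness by checking injectivity of the first induced map together with exactness in the middle, the argument turning on the injectivity of $\alpha_y$ and on the edge-map commutativity to control the preimage of $\mathrm{Im}\,B(e)$ under $\alpha_y$.

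Dually, for $\Phi_r$ I would apply the functor to the same sequence and examine $\Phi_r(A)(x) \to \Phi_r(B)(x) \to \Phi_r(C)(x)$ at a tail vertex $x$, where the maps are the restrictions of $\alpha_x,\beta_x$ to the intersections of kernels. Here I would target right exactness, i.e.\ surjectivity of the last induced map and exactness in the middle: given an element of $\bigcap\ker C(e)$, lift it through the surjection $\beta_x$ and then correct the lift by an element of $\mathrm{Im}\,\alpha_x = \ker\beta_x$ so that it lands inside $\bigcap\ker B(e)$, again invoking commutativity of $\beta$ with the edge maps together with injectivity of $\alpha$ on the relevant head vertices.

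The main obstacle, in both cases, is the endpoint/middle exactness step rather than the reduction: for $\Phi_l$ one must show that passing to the quotient creates no spurious element in the kernel of the first induced map, and for $\Phi_r$ one must show that every element of the intersection of kernels admits a lift through $\beta_x$ staying in the prescribed intersection. Both hinge on the single structural fact that a morphism in $\textbf{Rep}(Q)$ intertwines the edge maps, which is what should force the relevant preimage or lift to remain inside the prescribed image or kernel. I would treat the multi-edge case by handling one edge at a time and then intersecting the kernels (for $\Phi_r$) or summing the images (for $\Phi_l$), using the standing assumption $E = E'$ so that the auxiliary zeroed-out edges do not interfere with the chase.
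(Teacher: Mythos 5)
Your reduction to vertexwise verification is sound (kernels, cokernels, and exactness in $\textbf{Rep}(Q)$ are computed at each vertex, and the auxiliary zeroed-out edge maps do not affect exactness of sequences of vertex spaces), but the two steps you yourself flag as the crux are precisely the ones that fail: you have attached the handedness to the wrong functor. With the definitions as given ($\Phi_l$ a quotient by images, $\Phi_r$ an intersection of kernels), take the quiver $\bullet \xrightarrow{e} \bullet$, $E=\{e\}$, and the short exact sequence $0 \to A \to B \to C \to 0$ with $A = (0 \to \C)$, $B = (\C \xrightarrow{\mathrm{id}} \C)$, $C = (\C \to 0)$. At the head vertex, the mono $A \to B$ induces on quotients the map $\C/0 \to \C/\C$, i.e.\ $\C \to 0$, which is not injective; so quotienting by $\mathrm{Im}\,V(e)$ does not preserve monomorphisms and cannot be left exact, and no amount of edge-map commutativity will control the preimage of $\mathrm{Im}\,B(e)$, because $\alpha_{te}$ need not be surjective. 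Dually, at the tail vertex the epi $B \to C$ induces on kernels the map $\ker(\mathrm{id}) \to \ker(0)$, i.e.\ $0 \to \C$, which is not surjective, and your proposed correction of the lift by an element of $\ker\beta$ is unavailable here since $\ker\beta$ vanishes at the tail vertex; so intersecting kernels does not preserve epimorphisms and cannot be right exact. Your two ``main obstacles'' are thus not obstacles to be overcome but genuinely false assertions.

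What the intertwining relation $W(e)\circ f_{te} = f_{he}\circ V(e)$ actually yields is the opposite pairing. It forces $f(\ker V(e)) \subseteq \ker W(e)$, so restriction to $\bigcap \ker$ preserves injectivity, and a short chase (if $b \in \bigcap\ker B(e)$ with $\beta b = 0$, write $b = \alpha a$ and use injectivity of $\alpha$ at the head vertices to get $A(e)a=0$) shows the kernel-type functor preserves kernels, hence is \emph{left} exact; dually, the induced maps on quotients preserve surjectivity, and lifting classes through $\beta$ at the tail vertices gives middle exactness, so the quotient-type functor preserves cokernels and is \emph{right} exact. This is in substance how the paper argues: it checks additivity and invokes the criterion of Proposition 5.25 in \cite{rotman2009introduction} (left exact iff kernels are preserved, right exact iff cokernels are preserved) together with exactly the two preservation facts above — injectivity survives restriction to kernels, surjectivity survives passage to quotients. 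So a direct diagram chase of the kind you propose is a perfectly viable alternative route, but only after you swap which one-sided exactness you prove for which construction (equivalently, after reconciling the bullet definitions of $\Phi_l$ and $\Phi_r$ with the handedness their subscripts are meant to indicate); as written, your plan proves neither half.
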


\begin{proof}
It's easy to check that $\Phi_l$ and $\Phi_r$ are additive functors. Proposition 5.25 in  \cite{rotman2009introduction} shows a functor $F: \textbf{Mod}\rightarrow \textbf{Ab}$ is left exact when it preserves kernel and right exact when it preserves cokernel. We can see that for a morphism $f: V(x)\rightarrow W(x)$, we must have $\text{Im}f \subset \ker W(a)$, where $a$ is an edge being removed by the functor, with $ta = x$. This means when restricting $f$ to $\ker V(a) \rightarrow \ker W(a)$, injectiveness will be preserved while surjectiveness might not be preserved; when restricting $f$ to $V(x)/\ker V(a) \rightarrow W(x)/\ker W(a)$, surjectiveness will be preserved while injectiveness might not be preserved. Hence $\Phi_r$ preserves cokernel and is right exact, $\Phi_l$ preserves kernel and is left exact. 

\end{proof}

Now we give the following proposition that quantifies emergence as such algebraic structure:

\begin{prop}\label{prop55}
Given a set $E$ of edges to be neglected, the first left derived functor of $\Phi_r$ is 
\begin{align}
L^1 \Phi_r(V) = \bigoplus_{a\in E'_r}\Phi_r(V(ta)\otimes P_{ha})
\end{align}
where $E'_r$ is the set of edges whose arrow maps in $V$ are mapped to zero by $\Phi_r$, $P_{ha}$ is the path algebra spanned by all paths from the head of the edge $a$.

The higher left derived functors of $\Phi_r$ are trivial. 

The first right derived functor of $\Phi_l$ is
\begin{align}
R^1 \Phi_l(W) = \bigoplus_{a\in E'_l}\Phi_l(W(ha)\otimes I_{ta})
\end{align}
where $E'_l$ is the set of edges whose arrow maps in $W$ are mapped to zero by $\Phi_l$, $I_{ha}$ is the path algebra spanned by all paths to the tail of edge $a$.

The higher right derived functors of $\Phi_l$ are trivial. 
\end{prop}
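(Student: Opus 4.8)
The plan is to feed the explicit two-term projective and injective resolutions (\ref{reso_1}) and (\ref{reso_2}) into the derived-functor formulas (\ref{eq1}) and (\ref{eq2}), and then read off the kernel and cokernel by decomposing the differentials edge by edge. First I would record that both resolutions have length one: in (\ref{reso_1}) the only nonzero projective terms are $\bigoplus_{x\in Q_0}V(x)\otimes P_x$ and $\bigoplus_{a\in Q_1}V(ta)\otimes P_{ha}$, and dually for (\ref{reso_2}). Since $\Phi_r$ and $\Phi_l$ are additive, applying them annihilates every term in homological degree $\geq 2$, so the image complexes (\ref{reso_1_im}) and (\ref{reso_2_im}) are concentrated in degrees $0$ and $1$. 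This immediately yields $L^i\Phi_r=0$ and $R^i\Phi_l=0$ for all $i\geq 2$, settling the triviality of the higher derived functors with no further work.

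For $L^1\Phi_r$, since $\Phi_r$ is right exact (by the preceding proposition), formula (\ref{eq1}) gives $L^1\Phi_r(V)=\ker(\Phi_r d^V)$, and the source decomposes as $\bigoplus_{a\in Q_1}\Phi_r\big(V(ta)\otimes P_{ha}\big)$. I would analyze $\Phi_r d^V$ one summand at a time using the explicit description $d^V(v\otimes p)=(a\cdot v)\otimes p - v\otimes pa$ from (\ref{dv}). For an edge $a\in E'_r$, whose arrow map $\Phi_r$ sends to zero, I expect both terms to die after applying $\Phi_r$: the first because $\Phi_r$ zeroes the action of $a$, and the second because the concatenation $pa$ factors through the deleted arrow and is therefore annihilated in $\Phi_r\big(V(ta)\otimes P_{ta}\big)$; hence the whole summand $\Phi_r(V(ta)\otimes P_{ha})$ lands in $\ker(\Phi_r d^V)$. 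For a surviving edge $a\notin E'_r$, I would instead show that $\Phi_r d^V$ stays injective on the corresponding summand and that these images remain in direct sum, so such summands contribute nothing to the kernel. Combining the two cases gives $\ker(\Phi_r d^V)=\bigoplus_{a\in E'_r}\Phi_r(V(ta)\otimes P_{ha})$, which is the asserted formula.

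The statement for $R^1\Phi_l$ is handled dually. Here $\Phi_l$ is left exact, so formula (\ref{eq2}) expresses $R^1\Phi_l(W)$ as the cokernel $\Phi_l\big(\bigoplus_{a\in Q_1}I_{ta}\otimes W(ha)\big)/{\rm Im}(\Phi_l d_W)$. Using the identification of $W^*$ as a representation of the opposite quiver $Q^{op}$, the analysis of $d_W$ mirrors that of $d^V$ with kernels replaced by cokernels: the summands indexed by surviving edges $a\notin E'_l$ are hit surjectively, while those indexed by $a\in E'_l$ fall outside the image, so the cokernel is exactly $\bigoplus_{a\in E'_l}\Phi_l(W(ha)\otimes I_{ta})$.

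The main obstacle is the edge-by-edge bookkeeping in the second paragraph (and its dual), specifically the two claims that pin down the kernel and cokernel: that for a deleted edge both terms $(a\cdot v)\otimes p$ and $v\otimes pa$ of the differential vanish under $\Phi_r$, and that for a surviving edge the restricted differential stays injective with images in direct sum. The delicate point is controlling the concatenation term $v\otimes pa$, that is, tracking how the path-algebra structure of $P_{ha}$ interacts with the node-wise kernels taken by $\Phi_r$ once an arrow has been deleted, rather than the comparatively routine verifications of additivity and resolution length.
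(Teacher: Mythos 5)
Your proposal is correct and follows essentially the same route as the paper's own proof: both arguments dispose of the higher derived functors via the two-term length of the resolutions (\ref{reso_1}) and (\ref{reso_2}), compute $L^1\Phi_r(V)=\ker(\Phi_r d^V)$ edge by edge (deleted edges kill both terms $(a\cdot v)\otimes p$ and $v\otimes pa$ of the differential, surviving edges contribute nothing because $d^V$ is injective by exactness of the resolution), and obtain $R^1\Phi_l$ by dualizing through the opposite-quiver identification $R^1\Phi_l(W)=L^1\Phi_r(W^*)$. If anything, you are slightly more careful than the paper, which does not explicitly address the direct-sum independence of the images on surviving summands nor the interaction of the concatenation term $v\otimes pa$ with the node-wise kernels, the very points you flag as the delicate ones.
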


\begin{proof}
First, since the injective and projective resolution in the category of quiver representations has length $2$, we can see that all higher derived functors will vanish. 

We already proved in the previous proposition that $\Phi_r$ is right exact, so we compute the zeroth left derived functor is itself (see, for example, \cite{rotman2009introduction}). Now we compute the first left derived functor $L^1\Phi_r$. From (\ref{eq1}), $L^1\Phi_r = \ker \Phi_r d^V$, where $d^V$ is defined in (\ref{dv}).  Now if an edge $a$ is deleted by the functor $\Phi_r$ then for any $v\in V(ta)$ and $p\in P_{ha}$, we have$(a\cdot v)\otimes p = v\otimes ap = 0$, hence $\Phi(V(ta)\otimes P_{ha}) \subseteq \ker \Phi_r d^V$. If $a$ is preserved under $\Phi_r$, then $\Phi_r d^V$ will act the same as $d^V$ on $\Phi(V(ta)\otimes P_{ha})$, and $d^V$ is injective due to the exactness of resolution,   $\Phi_r(V(ta)\otimes P_{ha})$ will be non-zero thus not contribute to $\ker \Phi_r d^V$.

Now for the left exact functor $\Phi_l$, by definition its first right derived functor $R^1\Phi_l$ is computed as $\Phi_l(\bigotimes_{a\in Q_1}I_{ta}\otimes W(ha))/\text{Im}d_W$ from the injective resolution, and since the injective resolution is obtained from the projective resolution by taking the dual, we have $R^1\Phi_l(W) = L^1\Phi_r(W^*)$ where $W^*$ the dual of $W$ is the opposite quiver representation of $W$ obtained by reversing the direction of all edges. Under this choice, $W^*(ha) = W(ta)$, and $P^*(ta) = I_{ta}$.

\end{proof}

This proposition shows under the choice of the functor that deletes some edges in the network, emergence as $L^1\Phi(A)$ or $R^1\Phi(A)$  encodes information about the nodes and the paths being affected by deleting these edges. In the next section, we show how we can build a computational measure of emergence from such structures.



\section{A Numerical Measure of Emergence}

In proposition \ref{prop55},  the emergence of system $A$ is encoded as the cohomology  $L^1\Phi(A)$ or $R^1\Phi(A)$. It will be torsion-free\footnote{based on our choice of base field, which is usually $\C$ or $\R$.}, hence the information is encoded in its dimension, which can be regarded as a numerical measure to evaluate the emergence of a system. 
\begin{align}
\dim L^1\Phi_r(V) &= \dim \bigoplus_{e\in E'_r}\Phi_r(V(te)\otimes P_{he})\notag\\
&= \sum_{e\in E'_r} \dim \Phi_r(V(te))\times \dim \Phi_r(P_{he}) \label{measure_1}
\end{align}
Similarly, 
\begin{align}
\dim R^1\Phi_l(W)
&= \dim \bigoplus_{e\in E'_l}\Phi_l(W(he) \otimes I_{te})
\notag\\
&=  \sum_{e\in E'_l} \dim \Phi_l(W(he))\times \dim \Phi_l(I_{te})\label{measure_2}
\end{align}

Here $\dim\Phi_r(V(te))$ and $\dim\Phi_l(W(he))$ is the dimension of the image of the vector space $V(te)$ and $W(he)$ under the functor, and  $\dim \Phi_r(P_{he})$ and $\dim \Phi_l(I_{te})$ is the dimension of the image of the path algebra $P_{he}$ and $I_{te}$ under the functor. As discussed in the previous section, the quiver presentation can be constructed to eliminate overlaping domains or images, such that $E'_r = E'_l = E$,  so hereafter we use $E$ to represent the set of edges being effectively neglected by $\Phi_r$ or $\Phi_l$.

In order to understand this computational measure, it is important to understand the path algebra $P_x$ here. $P_x$ can be regarded as a quiver representation where the vector space associated with each node has the same dimension as the number of path from $x$ to this node \cite{derksen2017introduction}. So we can understand that the dimension  $\dim P_x$ is the number of paths originating from $x$. For the dimension $\dim \Phi(P_x)$, since $\Phi$ does not destroy paths (it just maps the linear map associated to edges to zero, which is distinct from removing the edge from the quiver), $\Phi(P_x)$ should have the same basis as $P_x$. Still, in practical considerations, it should be helpful to consider the path with nonzero linear maps associated to it, in this case, $\dim \Phi(P_x)$ will be interpreted as the paths from $x$ that doesn't contain a deleted edge in $E$. 



Now we discuss how this measure can be applied to networks. A direct graph can be regarded as a quiver representation where the vector space associated with each node is 1-dimensional. This kind of quiver representation is considered in, for example, \citep[][]{armenta2021representation, ganev2022quiver}. Now based on equations (\ref{measure_1}), we can neglect the $\dim \Phi_r(V(te))$ term, and the measure of emergence will base only on the path algebra term, $\dim \Phi_l(P_{he})$, which is the number of paths from $he$ that does not contain edges in $E$. Since the vector space associated to the vector spaces are 1-dimensional, if a node $x$ is the head of an edge in $E$, then $V(x)$ will be zero in this case. Let $H$ be the node sets of $\Phi(G)$, then $H$ contains the nodes that are not the head of any $e\in E$. The paths in $\Phi_l(P_{he})$ will be all paths in $\Phi(G)$ from $he$ to nodes in $H$. We now have an edge-based measure of emergence:


\begin{align}
\text{Emergence}(G,H) 
=\sum_{e \in E} \#\text{paths from $he$ to $H$}.  \label{networks_edge}
\end{align}


To come up with node-based measure of emergence, we consider the functor $\Phi$ which deletes all edges coming into a set of nodes, hereby represented as $G\setminus H$. This functor effectively deletes nodes from the network. In this case,  we have a measure of emergence when we pass from a graph $G$ to its subgraph $H$: 
\begin{align}
\text{Emergence}(G,H) &= \sum_{x\in G \setminus H} \#\text{paths from $N_H(x)$ to $H$},  \label{networks}
\end{align}
where $N_H(x)$ is the set of downstream neighbors of $x$ in $H$, and the paths only include nodes in $H$. Note that this result assumes the functor $\Phi$ as deleting nodes in $G \setminus H$. The measure of emergence here could have different forms under other choices of $\Phi$, which will be explored in future work. 

This result quantifies emergence, a kind of structural nonlinearity as given in Definition 2.1, as counting the number of paths generated from the nodes not preserved by the partial observation. It provides us with a numerical measure of emergence that is based on the local network structure, which is the path originating from each node in $G\setminus H$. When we increase the size of network $G$, the number of paths often grows exponentially, which resembles the scaling laws that are present in a lot of systems where emergent behaviors are observed.

Applying Proposition \ref{prop55}, when the functor $\Phi$ neglects more edges, it could imply stronger potential for emergence. This 
conceptually agrees with other existing measures. For example in \cite{gershenson2012complexity} emergence is defined as the information loss
\begin{align}
E = \frac{I_{\text{out}}}{I_{\text{in}}} \label{info_loss}
\end{align}
where $I_{\text{in}}$ is the "input information" and $I_{\text{out}}$ is the "output information", which can be seen as $I_{\text{in}}$ transformed by a computational process. \cite{gershenson2012complexity} simulated random Boolean networks with different connectivity $K$ and found that the loss of emergence increases with $K$. Larger $K$ corresponds to fewer neglected edges, which is equivalent to smaller $L^1\Phi(A)$. This means that $\Phi(s_1\vee s_2)$ is closer to $\Phi(s_1) \vee \Phi(s_2)$, which implies less information loss and greater $E$. 

Importantly, our measure associates the global property of emergence with specific nodes and edges. Compared to information-theoretic approaches, our measure provides mechanistic insights into the subsystems responsible for observed emergent phenomena. We plan to study this in detail in future work. 

We have developed a measure for emergence, and now we demonstrate numerically that the measure offers computational insights for network systems, modeled as quiver representations. We consider random Boolean network \citep[see,][]{kauffman1969metabolic,gershenson2004introduction},  a type of mathematical model used to represent genetic regulatory networks and study their dynamics. These networks consist of nodes (or vertices) and directed edges (or links). Each node in the network can be in one of two states, typically represented as 0 or 1 (or "off" and "on").  Random Boolean network can be characterized by the connectivity parameter $K$, where each node in the network receives input from $K$ nodes. In the setup, each node is connected to (receives input from) randomly chosen $K$ nodes. The initial state and the update rule, which decides the state of each node based on a given input, are generated randomly.

\begin{figure}[H]
\begin{center}
   \includegraphics[scale=0.48]{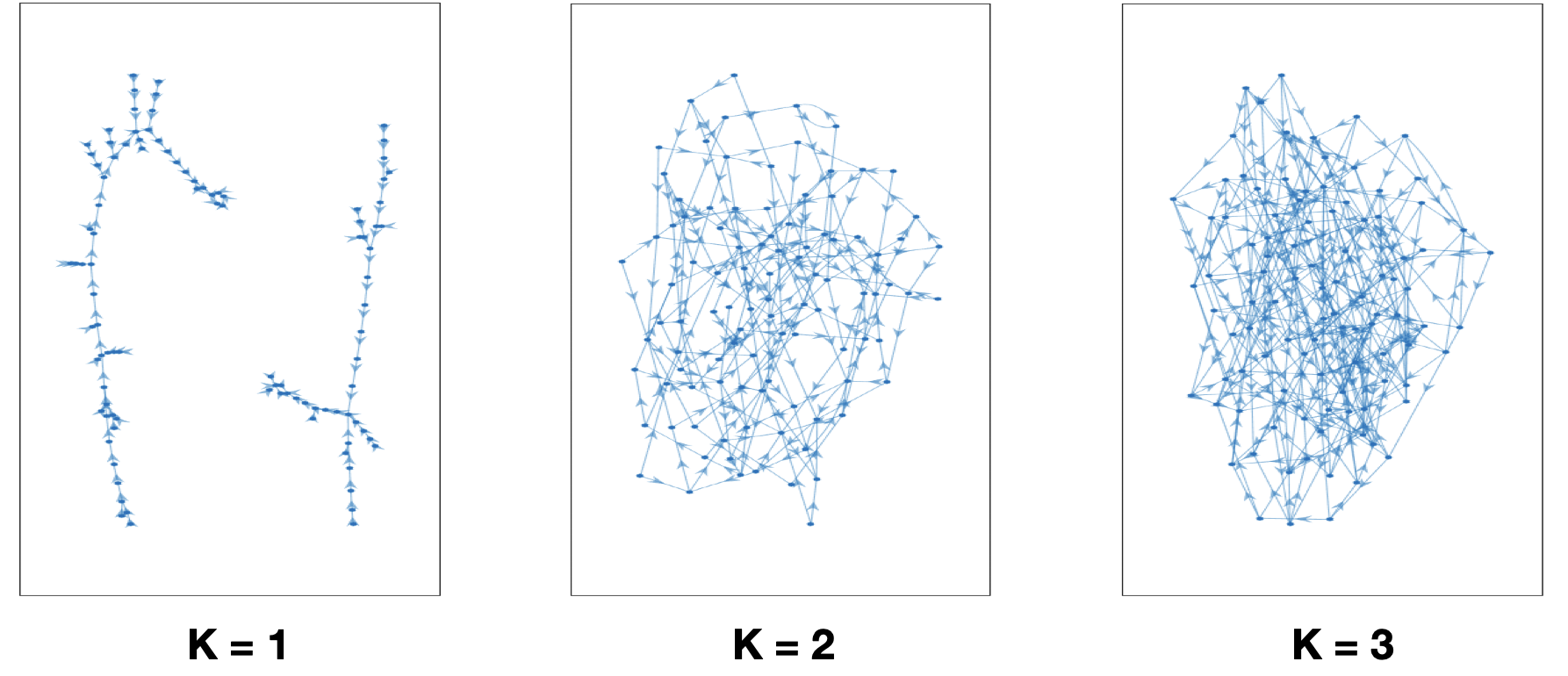} 
   \caption{Examples of random Boolean network for connectivity $K = 1,2,3$. }
   \end{center}
\end{figure}

Random Boolean network can be regarded as a generalization of cellular automata \cite{gershenson2004introduction}, and has been used as models to study emergent behavior, especially in the context of biological and complex systems\citep[see,][]{kauffman1993origins, gershenson2003classification}. In \cite{gershenson2012complexity}, random Boolean networks of different connectivity has been simulated, and the information loss as in (\ref{info_loss})
has been proposed as a measure of emergence, where $I_{out}$ is the Shannon information measured at the end state of the network and $I_{in}$ is the Shannon information measured at the initial state of the network. To compare our measure of emergence with this measure, we run our simulation on networks with a range of connectivities $K$ where the edges and activation functions are randomly generated. We measure emergence of such random Boolean networks by regarding the functor $\Phi_l$ as the one that sends a fully connected network $G$ to subnetwork $H$ which remains active during the network dynamics. A node is considered inactive here when it comes into state $1$ in less than $5\%$ of the time after the network dynamics reaches steady state. We consider the quiver representation for each network that has dimension $1$ for the input and output space of each node, and as discussed above, and emergence reflects the number of paths being "impacted" by the functor. We compared our node based measure of emergence with the information-theoretic measure of emergence as information loss in Figure \ref{emergence_infoloss_plot}.


\begin{figure}[H]
    \centering
    \begin{subfigure}[b]{0.49\textwidth}
        \centering
        \includegraphics[width=\textwidth]{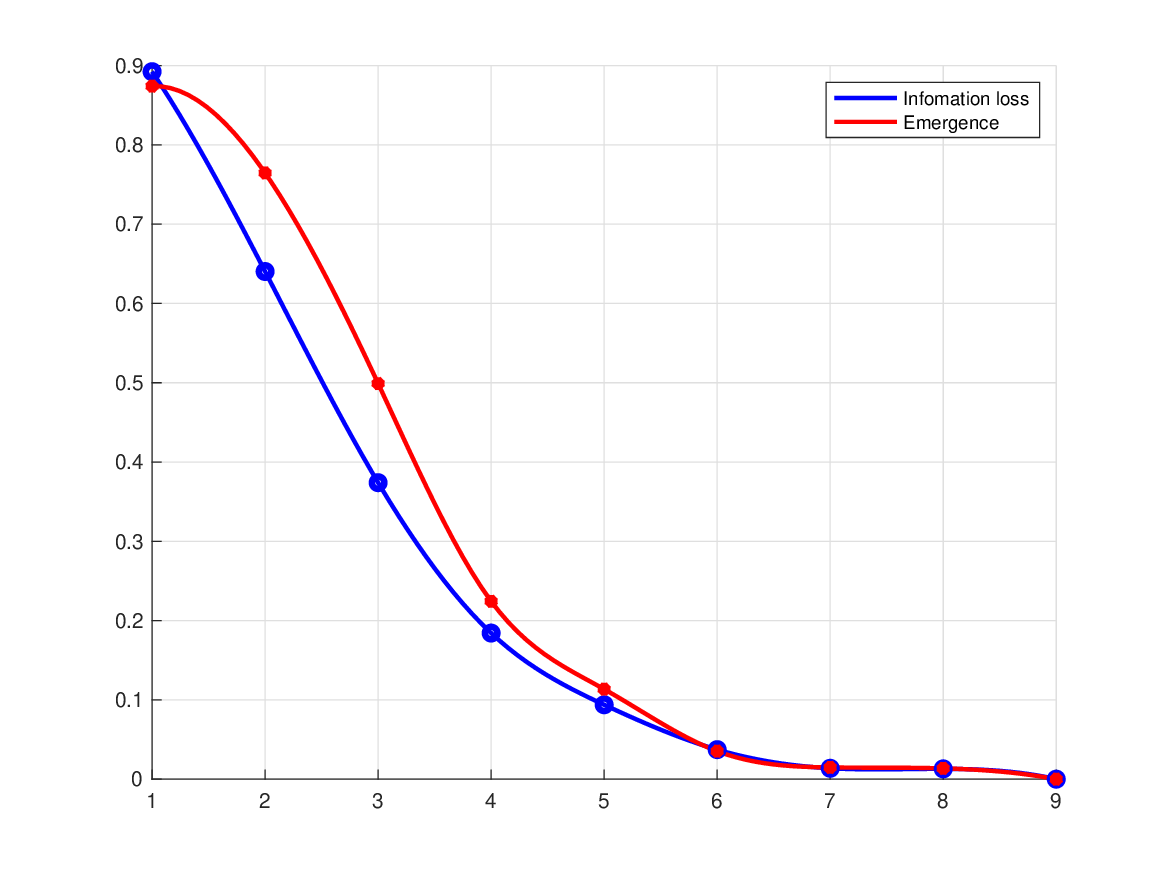}
        \caption{Comparison of emergence and information loss at different network connectivity $K$.}
        \label{emergence_infoloss_plot}
    \end{subfigure}
    \hfill
    \begin{subfigure}[b]{0.49\textwidth}
        \centering
        \includegraphics[width=\textwidth]{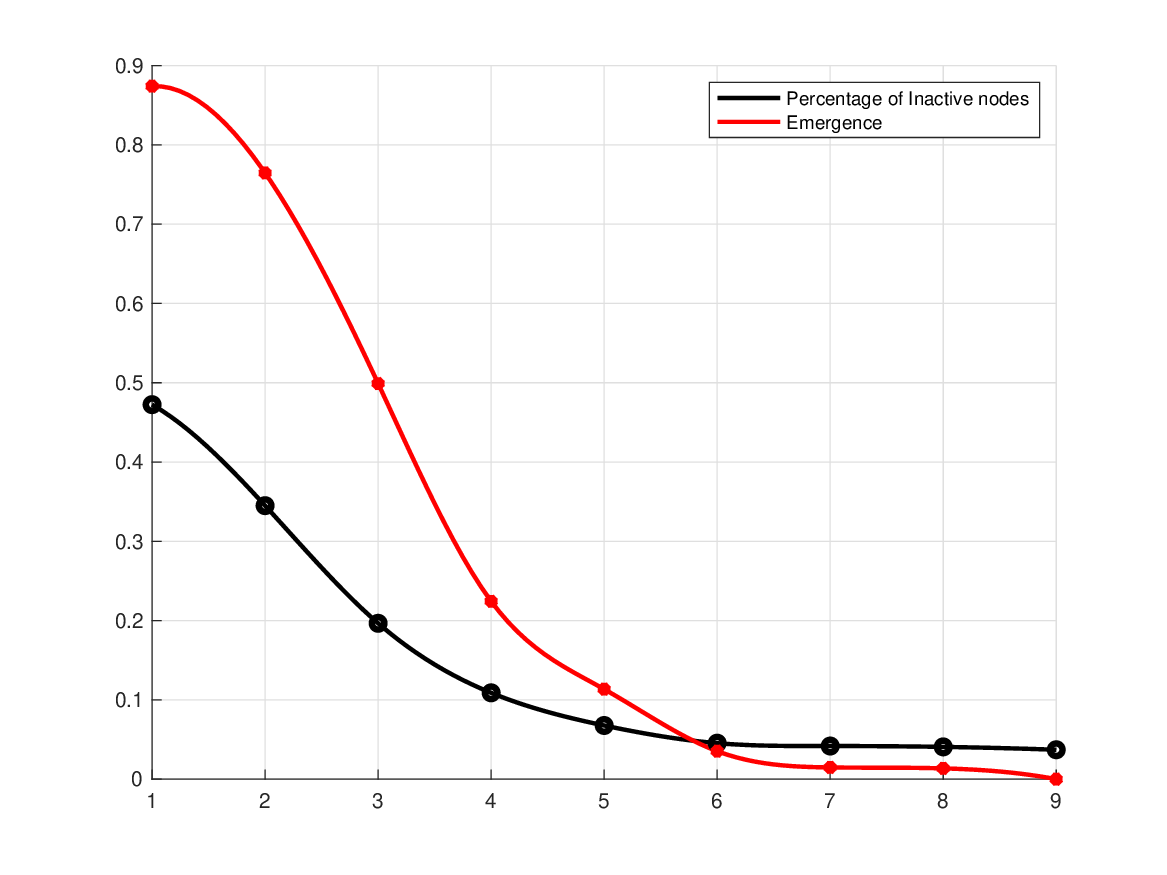}
        \caption{Emergence and the percentage of inactive nodes at different network connectivity $K$. }
        \label{emergence_nodeloss_plot}
    \end{subfigure}
    \caption{Numerical results showing the correlation of emergence and information loss}
    \label{fig:main}
\end{figure}

In our computation, we averaged over the random network ensemble of size $1000$. The network has $10$ nodes in total, when $K=9$ the network is fully connected. The initial state of the network is randomized and the probability for each node to be active is $1/2$. We simulate the network for $200$ time steps and it is observed that the network dynamics reach steady state in fewer than $100$ time steps. The emergence is normalized by the maximum and minimum value among all possible value of $K$. The normalized information loss is computed here as 
\begin{align}
E' = 1- \frac{I_{K}}{I}
\end{align}
where $I_K$ is the Shannon information of the network with connectivity $K$ when it reaches steady state and $I$ is the Shannon information of the fully connected network when it reaches steady state. If we regard the computational process as follows: the input is the entropy of the fully connected network and the output is the entropy of the $K$ connected network, then $I_{K}/I$ is interpreted as $E$ in Equation (\ref{info_loss}) in this way. 

From the simulation, we can see that our measure of emergence correlates with information loss, as a way to measure emergence on the random Boolean networks of different connectivity.  We think there is a connection between our measure of emergence and information loss, although it is not proved at the current stage. As an interpretation, the functor $\Phi$ we choose sends the fully connected network $G$ to the subnetwork $H$ of active nodes. Since the active nodes are given by the steady state of network dynamics, the functor ties the steady state of some connectivity $K$ network to the steady state of the fully connected network. Emergence in this context thus captures the difference between the steady states, and more generally the impact connectivity $K$ has on network dynamics.

We plot emergence together with the percentage of inactive nodes in Figure \ref{emergence_nodeloss_plot}. We can observe that emergence largely correlates with the percentage of inactive nodes in a random Boolean network. Although emergence is dependent on the specific positions of the inactive nodes, we can still see a correlation between emergence and the average number of inactive nodes. This agrees with our intuition in Section 2 and equation (\ref{networks}). 


\section{Discussion and Conclusion}


Emergence, generally conceived as the novel dynamics, patterns or behaviors of a system that are not present in the constituent parts of the system. We adopt a universal and abstract definition of emergence which captures the structural nonlinearity in system's effect or observation as we go from components to the whole system. We have established a framework that quantifies the emergence of systems based on their structural information, modeled by quiver representations, and we build a computational measure that quantifies the emergence of systems realized as the mapping between networks where only partial nodes and edges are being preserved, which can be used to model a large variety of systems characterized by the information flow on the network. Other kinds of information flow, for example, those involving coarse-graining,  can also be studied by constructing the functors that capture the information flow and compute their derived functors. The emergence is formulated in homological algebra as the "loss of exactness", the loss of the algebraic property when a sequence of subsystems modeling interactions is mapped to their partial observations. We showed that our measure correlates with the information-theoretic measure of emergence on random Boolean networks. One important feature of our measure is, as discussed in Section 6, it offers a way to study the contribution of each node to emergence, which is a global property. This could allow us to track the contribution of each node to emergence in a dynamic process, which can help detect the mechanisms of emergence.

Although this framework is algebraic in nature, we have shown that it can lead to computational measures that are applicable to real-world systems. The key is to choose the appropriate categories and functors to work on that have both relatively good algebraic properties and strong descriptive power to real-world systems. For example, if researchers aim at studying the emergent abilities of machine learning systems, the functor could be the one that ties the systems (for example, parts of the neural network architecture) to its performance, for example, on some data set and downstream tasks. The fundamental constraints within the framework is that $\textbf{System}$ and $\textbf{Phenome}$ have to be Abelian categories (or the extension of it that allows the discussion of homological algebra) and the functor $\Phi$ has to be exact at one side (either left exact or right exact). The first constraint is tackled by considering the category of quiver representations, which is Abelian, with nice homological algebra properties and can encode network systems. The second constraint is tackled by adopting the functor that truncates only part of the vector spaces corresponding to the kernel or cokernel of the linear maps, which can often be one-side exact. By these constructions we allow our framework to be applicable to a large class of real-world systems. 

We also leave a lot of research directions to be investigated later. To better capture emergence, it is possible to impose additional constraints on derived functor $R^1\Phi$, or study the difference in the derived functor properties in different emergent phenomenon. We showed that in the category of quiver presentation, we only have non-trivial first-order derived functors. It will be interesting to study the behavior of the higher-order derived functors and interpret them. When the $\textbf{System}$ category is no longer an Abelian category, we can either lift it to Abelian category, as discussed in \cite{adam2017systems}, or develop homological algebra in a non-Abelian case. It is in general an intriguing task to find new categories to work on to yield insights in network science or other disciplines in science and engineering. Also, the mathematical theory of quiver representations is itself a massive topic, with growing applications to physical systems and network theory, which will be an interesting exploration as well to combine with the homological algebra approach.

From a modeling and computational perspective, our work provides an approach to quantify the emergence of systems as networks, or more generally, quiver representations. It can capture not only the connectivity patterns of the components, but also the structures and dynamics at each node and edge. Within the framework of quiver representations, the mappings between vector spaces need to be linear maps, which renders the framework difficult to capture the non-linearity, like the activation function in artificial and biological neural networks. But still, quiver representation can be applied to model data and its network structures, so our work can be applied to the phase space of the dynamics, like the spike train data and cellular automata. Also, there has been theoretical development to model neural networks based on quiver representations \citep[see][]{jeffreys2022kahler,armenta2021representation}. Additionally, the computation of emergence based on networks or quiver representations would often involve counting the number of certain paths in the network, which would require searching algorithms that are time-consuming if the network is large. We call for new approximation schemes and algorithms that evaluate emergence in more efficient ways.

To our knowledge, this is the first work that gives a computational measure of emergent effects in a system and its constituent parts, thus opening up exciting new research avenues that lead to an understanding of the emergence mechanisms of systems of interest. The computational approach to emergence should appeal to people in a wide range of fields where there is a notion for emergence. It should be mentioned here that the term "generativity" can almost be used interchangeably with emergence. We choose to use emergence since it is a term widely used across fields while "generativity" is mostly discussed in artificial intelligence. 

The work in this paper could potentially apply to, for example, the optimization of network architecture that encourages or discourages emergence so as to achieve better performance in machine learning tasks, and the study of the cause of neurological disorders from the activity and morphology of neurons. In future work, we will apply this framework to the study of the emergent effects of biological and machine learning systems.  For networks in these fields, the nodes and edges often have intricate internal structures that traditional models are not able to capture. We will model these kinds of networks as quiver representations or other mathematical structures. There are topics in these fields that are related to emergence, like phase transition, symmetry breaking and generalizability of artificial neural networks, and we aim to establish the connections between emergence and these concepts as well.

\subsection*{Acknowledgments}

The authors would like to thank Dr. Elie Adam for the extensive conversations on the theoretical foundation of this work. This work was supported by unrestricted research funds to the Center for Engineered Natural Intelligence (CENI) at the University of California San Diego.

\bibliographystyle{APA}

\begin{thebibliography}{100}
\providecommand{\natexlab}[1]{#1}
\expandafter\ifx\csname urlstyle\endcsname\relax
  \providecommand{\doi}[1]{doi:\discretionary{}{}{}#1}\else
  \providecommand{\doi}{doi:\discretionary{}{}{}\begingroup
  \urlstyle{rm}\Url}

\bibitem[{Anderson(1972)}]{anderson1972more}
Anderson, P. More is different: broken symmetry and the nature of the hierarchical structure of science.. {\em Science}. \textbf{177}, 393-396 (1972).

\bibitem[{Kalantari et al.(2020)}]{kalantari2020emergence} Kalantari, S., Nazemi, E. \& Masoumi, B. Emergence phenomena in self-organizing systems: a systematic literature review of concepts, researches, and future prospects. {\em Journal Of Organizational Computing And Electronic Commerce}. \textbf{30}, 224-265 (2020).

\bibitem[{Post \& Weiss(1997)}]{post1997emergent}Post, R. \& Weiss, S. Emergent properties of neural systems: How focal molecular neurobiological alterations can affect behavior. {\em Development And Psychopathology}. \textbf{9}, 907-929 (1997).

\bibitem[{O’Connor(2020)}]{o2020emergent}O’Connor, T. Emergent properties.  (2020).

\bibitem[{Forrest(1990)}]{forrest1990emergent}Forrest, S. Emergent computation: Self-organizing, collective, and cooperative phenomena in natural and artificial computing networks: Introduction to the proceedings of the ninth annual cnls conference. {\em Physica D: Nonlinear Phenomena}. \textbf{42}, 1-11 (1990).

\bibitem[{Wei et al.(2022)}]{wei2022emergent} Wei, J., Tay, Y., Bommasani, R., Raffel, C., Zoph, B., Borgeaud, S., Yogatama, D., Bosma, M., Zhou, D., Metzler, D. \& Others Emergent abilities of large language models. {\em ArXiv Preprint ArXiv:2206.07682}. (2022)

\bibitem[{Mohammed \& Kora(2023)}]{mohammed2023comprehensive}Mohammed, A. \& Kora, R. A comprehensive review on ensemble deep learning: Opportunities and challenges. {\em Journal Of King Saud University-Computer And Information Sciences}. (2023)


\bibitem[{Teehan et al.(2022)}]{teehan2022emergent}Teehan, R., Clinciu, M., Serikov, O., Szczechla, E., Seelam, N., Mirkin, S. \& Gokaslan, A. Emergent structures and training dynamics in large language models. {\em Proceedings Of BigScience Episode$\backslash \#$ 5–Workshop On Challenges \& Perspectives In Creating Large Language Models}. pp. 146-159 (2022)


\bibitem[{Barabasi \& Oltvai(2004)}]{barabasi2004network}Barabasi, A. \& Oltvai, Z. Network biology: understanding the cell's functional organization. {\em Nature Reviews Genetics}. \textbf{5}, 101-113 (2004)

\bibitem[{Crutchfield(1994)}]{crutchfield1994calculi}Crutchfield, J. The calculi of emergence: computation, dynamics and induction. {\em Physica D: Nonlinear Phenomena}. \textbf{75}, 11-54 (1994)


\bibitem[{Bar-Yam(2004)}]{bar2004mathematical}Bar-Yam, Y. A mathematical theory of strong emergence using multiscale variety. {\em Complexity}. \textbf{9}, 15-24 (2004)


\bibitem[{Gershenson \& Fernández(2012)}]{gershenson2012complexity}Gershenson, C. \& Fernández, N. Complexity and information: Measuring emergence, self-organization, and homeostasis at multiple scales. {\em Complexity}. \textbf{18}, 29-44 (2012)


\bibitem[{Correa(2020)}]{correa2020metrics}Correa, J. Metrics of emergence, self-organization, and complexity for EWOM research. {\em Frontiers In Physics}. \textbf{8} pp. 35 (2020)

\bibitem[{Varley \& Hoel(2022)}]{varley2022emergence}Varley, T. \& Hoel, E. Emergence as the conversion of information: A unifying theory. {\em Philosophical Transactions Of The Royal Society A}. \textbf{380}, 20210150 (2022)

\bibitem[{Green(2023)}]{green2023emergence}Green, D. Emergence in complex networks of simple agents. {\em Journal Of Economic Interaction And Coordination}. pp. 1-44 (2023)

\bibitem[{Mediano et al.(2022)}]{mediano2022greater}Mediano, P., Rosas, F., Luppi, A., Jensen, H., Seth, A., Barrett, A., Carhart-Harris, R. \& Bor, D. Greater than the parts: a review of the information decomposition approach to causal emergence. {\em Philosophical Transactions Of The Royal Society A}. \textbf{380}, 20210246 (2022)




\bibitem[{Adam(2017)}]{adam2017systems}Adam, E. Systems, generativity and interactional effects. (Massachusetts Institute of Technology,2017)

\bibitem[{Mac Lane(2013)}]{mac2013categories}Mac Lane, S. Categories for the working mathematician. (Springer Science \& Business Media,2013)


\bibitem[{Freyd(1964)}]{freyd1964abelian}Freyd, P. Abelian categories. (Harper \& Row New York,1964)


\bibitem[{Derksen \& Weyman(2005)}]{derksen2005quiver}Derksen, H. \& Weyman, J. Quiver representations. {\em Notices Of The AMS}. \textbf{52}, 200-206 (2005)

\bibitem[{Derksen \& Weyman(2017)}]{derksen2017introduction}Derksen, H. \& Weyman, J. An introduction to quiver representations. (American Mathematical Soc.,2017)


\bibitem[{Schiffler,(2014)}]{schiffler2014quiver}Schiffler, R. Quiver representations. (Springer,2014)


\bibitem[{Cartan \& Eilenberg(1999)}]{cartan1999homological}Cartan, H. \& Eilenberg, S. Homological algebra. (Princeton university press,1999)

\bibitem[{Gelfand \& Manin(2013)}]{gelfand2013methods}Gelfand, S. \& Manin, Y. Methods of homological algebra. (Springer Science \& Business Media,2013)

\bibitem[{Rotman \& Rotman(2009)}]{rotman2009introduction}Rotman, J. \& Rotman, J. An introduction to homological algebra. (Springer,2009)


\bibitem[{Silva(2019)}]{silva2019effect}Silva, G. The effect of signaling latencies and node refractory states on the dynamics of networks. {\em Neural Computation}. \textbf{31}, 2492-2522 (2019)


\bibitem[{Puppo et al.(2018)}]{puppo2018optimized}Puppo, F., George, V. \& Silva, G. An optimized structure-function design principle underlies efficient signaling dynamics in neurons. {\em Scientific Reports}. \textbf{8}, 10460 (2018)

\bibitem[{George et al.(2022)}]{george2022computational}George, V., Morar, V. \& Silva, G. A Computational Model for Storing Memories in the Synaptic Structures of the Brain. {\em BioRxiv}. pp. 2022-10 (2022)


\bibitem[{Armenta \& Jodoin(2021)}]{armenta2021representation}Armenta, M. \& Jodoin, P. The representation theory of neural networks. {\em Mathematics}. \textbf{9}, 3216 (2021)



\bibitem[{(Pardo-Guerra \& Silva, 2024)}]{pardo2024preradicals}Pardo-Guerra, S. \& Silva, G. On preradicals, persistence, and the flow of information. {\em International Journal Of General Systems}. pp. 1-25 (2024)


\bibitem[{Silva(2011)}]{silva2011need}Silva, G. The need for the emergence of mathematical neuroscience: beyond computation and simulation. {\em Frontiers In Computational Neuroscience}. \textbf{5} pp. 51 (2011)

\bibitem[{(Buibas \& Silva(2011)}]{buibas2011framework}Buibas, M. \& Silva, G. A framework for simulating and estimating the state and functional topology of complex dynamic geometric networks. {\em Neural Computation}. \textbf{23}, 183-214 (2011)


\bibitem[{Snooks \& others(2008)}]{snooks2008general}Snooks, G. \& Others A general theory of complex living systems: Exploring the demand side of dynamics. {\em Complexity}. \textbf{13}, 12-20 (2008)

\bibitem[{Rosen(1958a)}]{rosen1958relational}Rosen, R. A relational theory of biological systems. {\em The Bulletin Of Mathematical Biophysics}. \textbf{20} pp. 245-260 (1958)




\bibitem[{Rosen(1958b)}]{rosen1958representation}Rosen, R. The representation of biological systems from the standpoint of the theory of categories. {\em The Bulletin Of Mathematical Biophysics}. \textbf{20} pp. 317-341 (1958)


\bibitem[{Hoff et al.(2004)}]{hoff2004composition}Hoff, M., Roggia, K. \& Menezes, P. Composition of transformations: A framework for systems with dynamic topology. {\em International Journal Of Computing Anticipatory Systems}. \textbf{14} pp. 259-270 (2004)

\bibitem[{Peak et al.(2004)}]{peak2004evidence}Peak, D., West, J., Messinger, S. \& Mott, K. Evidence for complex, collective dynamics and emergent, distributed computation in plants. {\em Proceedings Of The National Academy Of Sciences}. \textbf{101}, 918-922 (2004)

\bibitem[{Gadioli La Guardia \& Jeferson Miranda(2018)}]{gadioli2018categoriacal}Gadioli La Guardia, G. \& Jeferson Miranda, P. On a categoriacal theory for emergence. {\em ArXiv E-prints}. pp. arXiv-1810 (2018)


\bibitem[{Bassett \& Sporns(2017)}]{bassett2017network}Bassett, D. \& Sporns, O. Network neuroscience. {\em Nature Neuroscience}. \textbf{20}, 353-364 (2017)

\bibitem[{Bassett et al.(2018)}]{bassett2018nature}Bassett, D., Zurn, P. \& Gold, J. On the nature and use of models in network neuroscience. {\em Nature Reviews Neuroscience}. \textbf{19}, 566-578 (2018)


\bibitem[{Northoff et al.(2019)}]{northoff2019mathematics}Northoff, G., Tsuchiya, N. \& Saigo, H. Mathematics and the brain: a category theoretical approach to go beyond the neural correlates of consciousness. {\em Entropy}. \textbf{21}, 1234 (2019)

\bibitem[{Bradley(2018)}]{bradley2018applied}Bradley, T. What is applied category theory?. {\em ArXiv Preprint ArXiv:1809.05923}. (2018)

\bibitem[{Fong \& Spivak(2018)}]{fong2018seven}Fong, B. \& Spivak, D. Seven sketches in compositionality: An invitation to applied category theory. {\em ArXiv Preprint ArXiv:1803.05316}. (2018)

\bibitem[{Parada-Mayorga et al.(2020)}]{parada2020quiver}Parada-Mayorga, A., Riess, H., Ribeiro, A. \& Ghrist, R. Quiver signal processing (qsp). {\em ArXiv Preprint ArXiv:2010.11525}. (2020)

\bibitem[{Ganev \& Walters(2022)}]{ganev2022quiver}Ganev, I. \& Walters, R. Quiver neural networks. {\em ArXiv Preprint ArXiv:2207.12773}. (2022)

\bibitem[{Jeffreys \& Lau(2022)}]{jeffreys2022kahler}Jeffreys, G. \& Lau, S. Kähler Geometry of Framed Quiver Moduli and Machine Learning. {\em Foundations Of Computational Mathematics}. pp. 1-59 (2022)

\bibitem[{Kirillov Jr(2016)}]{kirillov2016quiver}Kirillov Jr, A. Quiver representations and quiver varieties. (American Mathematical Soc.,2016)

\bibitem[{Kauffman(1969)}]{kauffman1969metabolic}Kauffman, S. Metabolic stability and epigenesis in randomly constructed genetic nets. {\em Journal Of Theoretical Biology}. \textbf{22}, 437-467 (1969)

\bibitem[{Gershenson(2004)}]{gershenson2004introduction}Gershenson, C. Introduction to random Boolean networks. {\em ArXiv Preprint Nlin/0408006}. (2004)

\bibitem[{Kauffman(1993)}]{kauffman1993origins}Kauffman, S. The origins of order: Self-organization and selection in evolution. (Oxford University Press, USA,1993)

\bibitem[{Gershenson(2003)}]{gershenson2003classification}Gershenson, C. Classification of random Boolean networks. {\em Artificial Life VIII, Proceedings Of The Eighth International Conference On Artificial Life}. pp. 1-8 (2003)


\bibitem[{Wei et al.(2022)}]{wei2022chain}Wei, J., Wang, X., Schuurmans, D., Bosma, M., Xia, F., Chi, E., Le, Q., Zhou, D. \& Others Chain-of-thought prompting elicits reasoning in large language models. {\em Advances In Neural Information Processing Systems}. \textbf{35} pp. 24824-24837 (2022)

\bibitem[{Du et al.(2024)}]{du2024understanding}Du, Z., Zeng, A., Dong, Y. \& Tang, J. Understanding emergent abilities of language models from the loss perspective. {\em ArXiv Preprint ArXiv:2403.15796}. (2024)

\bibitem[{Latora \& Marchiori(2001)}]{latora2001efficient}Latora, V. \& Marchiori, M. Efficient behavior of small-world networks. {\em Physical Review Letters}. \textbf{87}, 198701 (2001)

\bibitem[{Rosas et al.(2024)}]{rosas2024software}Rosas, F., Geiger, B., Luppi, A., Seth, A., Polani, D., Gastpar, M. \& Mediano, P. Software in the natural world: A computational approach to emergence in complex multi-level systems. {\em ArXiv Preprint ArXiv:2402.09090}. (2024)


\bibitem[{Moffett et al.(2021)}]{moffett2021ant}Moffett, M., Garnier, S., Eisenhardt, K., Furr, N., Warglien, M., Sartoris, C., Ocasio, W., Knudsen, T., Bach, L. \& Offenberg, J. Ant colonies: building complex organizations with minuscule brains and no leaders. {\em Journal Of Organization Design}. \textbf{10} pp. 55-74 (2021)

\bibitem[{Goldenfeld(2018)}]{goldenfeld2018lectures}Goldenfeld, N. Lectures on phase transitions and the renormalization group. (CRC Press,2018)

\bibitem[{Mengistu et al.(2016)}]{mengistu2016evolutionary}Mengistu, H., Huizinga, J., Mouret, J. \& Clune, J. The evolutionary origins of hierarchy. {\em PLoS Computational Biology}. \textbf{12}, e1004829 (2016)

\bibitem[{Uddén et al.(2020)}]{udden2020hierarchical}Uddén, J., Jesus Dias Martins, M., Zuidema, W. \& Tecumseh Fitch, W. Hierarchical structure in sequence processing: How to measure it and determine its neural implementation. {\em Topics In Cognitive Science}. \textbf{12}, 910-924 (2020)




\end{thebibliography}

\end{document}